\documentclass[11pt, a4paper]{amsart}
\usepackage{a4wide}
\usepackage{comment,mathrsfs, amssymb,amsmath,url,amsthm}

\title[The algebraic dichotomy conjecture for infinite domain CSPs]{The algebraic dichotomy conjecture for infinite domain Constraint Satisfaction Problems}
\date{\today}

\author
{Libor Barto}
	\address{Department of Algebra, MFF UK, Sokolovska 83, 186 00 Praha 8, Czech Republic}
	\email{libor.barto@gmail.com}
	\urladdr{http://www.karlin.mff.cuni.cz/~barto/}
\author
{Michael Pinsker}
	\address{Department of Algebra, MFF UK, Sokolovska 83, 186 00 Praha 8, Czech Republic}    
	\email{marula@gmx.at}
    \urladdr{http://dmg.tuwien.ac.at/pinsker/}
\thanks{Libor Barto was supported by the the Grant Agency of the Czech Republic, grant GA\v CR
13-01832S. Michael Pinsker has been funded through project  P27600 of the  Austrian Science Fund (FWF)}

    \newtheorem{theorem}{Theorem}[section]
    \newtheorem{lemma}[theorem]{Lemma}
    
    \newtheorem{conjecture}[theorem]{Conjecture}
    \newtheorem{corollary}[theorem]{Corollary}

    \newtheorem{definition}[theorem]{Definition}

\DeclareMathOperator{\CSP}{CSP}
\DeclareMathOperator{\Pol}{Pol}

\newcommand{\tuple}[1]{\mathbf{#1}}
\newcommand{\clone}[1]{\mathcal{#1}}
\newcommand{\relstr}[1]{\mathbb{#1}}
\newcommand{\NP}{\mathrm{NP}}
\newcommand{\trivclone}{\clone{P}}
\newcommand{\group}[1]{\mathcal{#1}}
\newcommand{\clgr}[1]{\mathrm{Gr}(#1)}
\newcommand{\forb}[1]{\mathscr{#1}}
\newcommand{\To}{\rightarrow}

\begin{document}

\maketitle

\begin{abstract}
We prove that an $\omega$-categorical core structure primitively positively interprets all finite structures with parameters if and only if some stabilizer of its polymorphism clone has a homomorphism to the clone of projections, and that this happens if and only if its polymorphism clone does not contain operations $\alpha$, $\beta$, $s$ satisfying the identity $\alpha s(x,y,x,z,y,z) \approx \beta s(y,x,z,x,z,y)$.

This establishes an algebraic criterion equivalent to the conjectured borderline between P and NP-complete CSPs over reducts of finitely bounded homogenous structures, and accomplishes one of the steps of a proposed strategy for reducing the infinite domain CSP dichotomy conjecture to the finite case.

Our theorem is also of independent mathematical interest, characterizing a topological property of any $\omega$-categorical core structure (the existence of a continuous homomorphism of a stabilizer of its polymorphism clone to the projections) in purely algebraic terms (the failure of an identity as above).
\end{abstract}


\section{Introduction and Main Results}

The Constraint Satisfaction Problem (CSP) over a relational structure $\relstr{A}$ in a finite language, denoted by $\CSP(\relstr{A})$, is the problem of deciding whether or not a given primitive positive (pp-) sentence in the language of $\relstr{A}$ holds in $\relstr{A}$.
An alternative, combinatorial definition of the CSP is also popular in the literature: $\CSP(\relstr{A})$ is the problem of deciding whether a given relational structure in the same language as $\relstr{A}$ maps homomorphically into $\relstr{A}$.

For CSPs over certain structures, including all finite ones, a computational complexity classification has been conjectured, separating NP-hard problems from polynomial-time solvable ones. In the following, we shall state and discuss this conjecture, and subsequently present an improvement thereof which follows from our results. In order to keep the presentation compact, we postpone most definitions to Section~\ref{sect:notation}, and refer also to the monograph~\cite{Bodirsky-HDR} as well as to the shorter~\cite{Pin15}. As a reference for standard notions from model theory, we point to the textbook~\cite{Hodges}.

All structures in the present article are implicitly assumed to be finite or countable. 

\subsection{The tractability conjecture}
The CSP over a structure with finite domain is clearly contained in the class $\NP$. Some well-known $\NP$-complete problems, such as variants of 3-SAT or 3-COLORING, can be formulated as CSPs over suitable finite structures, as well as some problems solvable in polynomial time, such us 2-SAT, HORN-SAT, or systems of linear equations over finite fields. In fact, it has been conjectured~\cite{FederVardi} that CSPs over finite structures enjoy a dichotomy in the sense that every such CSP is either $\NP$-complete, or tractable, i.e., solvable in polynomial time. A large amount of attention has been brought to confirming or refuting this conjecture, resulting in considerable progress; see~\cite{Barto-survey} for a recent brief introduction and survey. In particular, a precise borderline between $\NP$-complete and tractable CSPs has been delineated~\cite{JBK} and is now referred to as the \emph{tractability conjecture} or also the \emph{algebraic dichotomy conjecture}, since most of the equivalent formulations are algebraic.

When we allow the domain of $\relstr{A}$ to be infinite, the situation changes drastically: every computational decision problem is polynomial-time equivalent to $\CSP(\relstr{A})$ for some $\relstr{A}$~\cite{BodirskyGrohe}! A reasonable assumption on $\relstr{A}$ which sends the CSP back to the class $\NP$ and still allows to cover many interesting computational problems which cannot be modeled as the CSP of a finite structure, 
is that $\relstr{A}$ is a reduct of a finitely bounded homogeneous structure. Substantial results for such CSPs include the full complexity classification of the CSPs over the reducts of $(\mathbb{Q};<)$ in \cite{tcsps-journal} (classifying the complexity of problems previously called temporal constraint satisfaction problems), the reducts of the random graph (generalizing Schaefer's theorem for Boolean CSPs to what can be called the propositional logic for graphs), and the reducts of the binary branching $C$-relation~\cite{Phylo-Complexity} (classifying the complexity of problems known as phylogeny CSPs). The methods here include the algebraic methods from the finite, but in addition tools from model theory and Ramsey theory~\cite{BP-reductsRamsey}. Moreover, topological considerations have played a significant role in the development of the theory~\cite{Topo-Birk}, and indeed seem inevitable in a sense, although paradoxically it was believed or at least hoped that they would  ultimately turn out inutile in a general complexity classification. On the other hand, due to the fact that the investigation of infinite domain CSPs is more recent, and the additional technical complications which are to be expected when passing from the finite to the infinite, 
the purely algebraic theory as known in the finite is still quite undeveloped in the infinite; the present work can be seen as the first purely algebraic result for such CSPs.

A generalization of the finite domain tractability conjecture has been formulated by Manuel Bodirsky and the second author. 
To state it, we first recall several basic facts.
When $\relstr{B}$ has a primitive positive (pp-)interpretation without parameters in $\relstr{A}$, then $\CSP(\relstr{B})$ reduces to $\CSP(\relstr{A})$. When $\relstr{A}$ is an $\omega$-categorical core, then this statement is even true for pp-interpretations with parameters. By~\cite{Cores-journal}, every $\omega$-categorical structure, in particular every reduct of a finitely bounded homogeneous structure, is homomorphically equivalent to an $\omega$-categorical core, which is unique up to isomorphism. Moreover,  the CSPs over any two structures which are homomorphically equivalent are equal, and so passing from an $\omega$-categorical structure to its core does not result in any loss of information concerning the CSP. 

These facts imply that the CSP over an $\omega$-categorical structure is $\NP$-hard whenever its core pp-interprets with parameters some structure whose CSP is NP-hard, such as 
\[
\mathbb{K}_3 = (\{1,2,3\}; \neq)\]
 whose CSP is the 3-coloring problem, or
\[
\mathbb{L} = (\{0,1\}; R_{000}, R_{001}, R_{011}, R_{111}), \  R_{abc} = \{0,1\}^3 \setminus \{(a,b,c)\}
\]
whose CSP is the 3-SAT problem, or
\[
\mathbb{M} = (\{0,1\}; \{(0,0,1),(0,1,0),(1,0,0)\})
\]
whose CSP is the positive 1-in-3-SAT problem. In fact, these three structures not only pp-interpret each other, they pp-interpret all finite structures. The infinite domain tractability conjecture postulates, as does the corresponding conjecture for finite structures, that pp-interpreting all finite structures with parameters in the core is the only source of hardness for CSPs over reducts of finitely bounded homogeneous structures.

\begin{conjecture}[Bodirsky + Pinsker\; 2011; cf.~\cite{BPP-projective-homomorphisms}]\label{conj:dicho}
Let $\relstr{B}$ be a reduct of a finitely bounded  homogeneous structure and let $\relstr{A}$ be the core of $\relstr{B}$. Then
\begin{itemize}
\item $\relstr{A}$ pp-interprets all finite structures with parameters (and thus $\CSP(\relstr{B})$ is NP-complete), or
\item $\CSP(\relstr{B})$ is solvable in polynomial time.
\end{itemize}
\end{conjecture}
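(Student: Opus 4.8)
The statement packages two implications, of very different difficulty. The first --- that if the core $\relstr{A}$ pp-interprets all finite structures with parameters then $\CSP(\relstr{B})$ is NP-complete --- is the routine direction. Since $\relstr{A}$ is an $\omega$-categorical core, a pp-interpretation with parameters of, say, $\mathbb{M}$ (positive 1-in-3-SAT) gives a polynomial-time reduction of an NP-hard problem to $\CSP(\relstr{A}) = \CSP(\relstr{B})$, and membership in NP is guaranteed by $\relstr{B}$ being a reduct of a finitely bounded homogeneous structure. So the whole content lies in the converse: assuming $\relstr{A}$ does \emph{not} pp-interpret all finite structures with parameters, one must exhibit a polynomial-time algorithm for $\CSP(\relstr{B})$.

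My plan for the converse is to run the proposed reduction-to-the-finite-case strategy. First I would invoke the main theorem of the present paper to turn the negative hypothesis into a usable algebraic statement: $\Pol(\relstr{A})$ contains operations $\alpha,\beta,s$ with $\alpha s(x,y,x,z,y,z) \approx \beta s(y,x,z,x,z,y)$, equivalently no stabilizer of $\Pol(\relstr{A})$ admits a (automatically continuous) homomorphism to the clone of projections. Next, using that $\relstr{B}$, and hence $\relstr{A}$, is a reduct of a finitely bounded homogeneous structure, I would pass to a homogeneous Ramsey expansion in a finite language and use the canonization machinery to replace the pseudo-Siggers operation $s$ by a \emph{canonical} polymorphism, i.e.\ one that acts regularly on the orbits of tuples of that expansion. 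The canonical behaviours of polymorphisms of $\relstr{A}$ then form the clone of a finite-domain algebra $\mathbb{D}$ whose domain is the (finite) set of relevant orbits.

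The crux is then twofold. One must show that the non-collapse of $\Pol(\relstr{A})$ to projections descends to $\mathbb{D}$ --- that $\mathbb{D}$ is not a \emph{trivial} finite algebra --- so that the finite-domain tractability conjecture (now the theorem of Bulatov and Zhuk) supplies a polynomial-time algorithm for $\CSP(\mathbb{D})$; this requires a preservation statement for the pseudo-Siggers identity under taking canonical behaviours, possibly after passing to a suitable pp-power of the expansion. And one must lift that algorithm back to $\relstr{B}$: given an input instance, sample it against the homogeneous structure so that solvability over $\relstr{B}$ becomes equivalent to solvability of a derived instance of $\CSP(\mathbb{D})$, finite boundedness being exactly what keeps the sample size polynomial.

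I expect the main obstacle to be precisely this interface between the infinite structure and its finite canonical shadow: guaranteeing that canonical polymorphisms are rich enough to witness tractability, and that the finite reduction is simultaneously sound and complete. In particular it is conceivable that canonical polymorphisms alone do not suffice, so that one has to iterate the construction or combine several canonical behaviours into a single finite template --- and producing a clean formulation of the reduction together with a proof that it always terminates with a tractable finite algebra is the genuinely hard, still-open part of the program, lying strictly beyond the algebraic criterion that the present paper establishes.
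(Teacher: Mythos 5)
This statement is a \emph{conjecture}, not a theorem: the paper does not prove it, and it remains open, so there is no ``paper's own proof'' to compare against. Your assessment of its status is accurate. The direction you call routine is indeed the one the paper itself justifies in the surrounding discussion: pp-interpretation with parameters in the $\omega$-categorical core $\relstr{A}$ of an NP-hard template such as $\mathbb{M}$ yields NP-hardness of $\CSP(\relstr{A})=\CSP(\relstr{B})$, and membership in NP comes from finite boundedness of the underlying homogeneous structure. For the converse, your outline reproduces the program of~\cite{Pin15} that the paper explicitly references: the present paper's contribution is precisely the first step, namely converting the failure of pp-interpretability into the algebraic witness $\alpha s(x,y,x,z,y,z)\approx\beta s(y,x,z,x,z,y)$ (equivalently, no stabilizer of $\Pol(\relstr{A})$ maps homomorphically onto $\trivclone$), and the remaining task --- canonization over a Ramsey expansion, descent of the non-trivial identity to a finite algebra of canonical behaviours, and a sound and complete polynomial-time reduction to the finite CSP --- is exactly what the paper reformulates as Conjecture~\ref{conj:newdicho} and leaves open. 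You correctly identify where the genuine difficulty lies (canonical polymorphisms may not suffice to witness tractability, and the finite shadow must be shown non-trivial), so your proposal should be read as a correct status report and strategy sketch rather than as a proof; no proof of this statement exists in the paper or, to date, elsewhere.
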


In the present article, we show that the failure of the first condition of this conjecture is witnessed by a certain algebraic fact that could, similarly to what is the hope in the finite setting, potentially be exploited for proving tractability of the CSP. We now make the notion of an algebraic witness more precise.

\subsection{The algebraic approach}

The algebraic approach to finite domain CSPs is based on the fact that pp-interpretability strength of a finite structure $\relstr{A}$ is determined by its set of compatible operations, the so called \emph{polymorphism clone} of $\relstr{A}$, denoted by $\Pol(\relstr{A})$. Namely, by classical universal algebraic results, 
a finite structure $\relstr{A}$ pp-interprets a finite structure $\relstr{B}$ if and only if there exists a clone homomorphism from $\Pol(\relstr{A})$ to $\Pol(\relstr{B})$, that is, a mapping which preserves arities and \emph{identities} (universally quantified equations). This fact implies that the complexity of $\CSP(\relstr{A})$ only depends on the identities satisfied by operations in $\Pol(\relstr{A})$  
and leads to an algebraic reformulation of the first item of Conjecture~\ref{conj:dicho} for finite structures. 
The following notation is used: the \emph{stabilizer} of $\Pol(\relstr{A})$ by finitely many constants $c_1,\ldots,c_n$ is denoted $\Pol(\relstr{A},c_1,\ldots,c_n)$; its elements are those polymorphisms of $\relstr{A}$ which preserve all unary relations $\{c_i\}$. The clones $\Pol(\relstr{L})$, $\Pol(\relstr{M})$, as well as $\Pol(\relstr{K},0,1,2)$ are trivial, i.e., they contain only projections. Let us denote the clone of projections on a $2$-element set by $\trivclone$. The clone of projections on any other set of at least $2$ elements is isomorphic to $\trivclone$.

\begin{theorem}[\cite{Geiger,BoKaKoRo,Bir-On-the-structure}, cf.~\cite{Bodirsky-HDR}] \label{thm:finite-birk}
The following are equivalent for a finite relational structure $\relstr{A}$ with domain $A = \{c_1, \dots, c_n\}$.
\begin{itemize}
\item $\relstr{A}$ pp-interprets all finite structures with parameters.
\item There exists a clone homomorphism from $\Pol(\relstr{A},c_1,\ldots,c_n)$ to $\trivclone$. 
\end{itemize}
\end{theorem}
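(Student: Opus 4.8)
The plan is to derive the statement from the classical equivalence, for finite structures, between primitive positive interpretability and the existence of clone homomorphisms, the only extra ingredient being the translation of ``with parameters'' into the stabilizer. So let $\relstr{A}^{\ast}$ denote the expansion of $\relstr{A}$ by the unary singleton relations $\{c_1\},\dots,\{c_n\}$. First I would observe that, because the domain $A=\{c_1,\dots,c_n\}$ is finite, a primitive positive interpretation of a structure $\relstr{B}$ in $\relstr{A}$ \emph{with parameters} amounts to the same as a primitive positive interpretation of $\relstr{B}$ in $\relstr{A}^{\ast}$ without parameters: a parameter is an element of $A$, hence is pinned down by one of the singleton relations of $\relstr{A}^{\ast}$, and conversely each such singleton relation is pp-definable in $\relstr{A}$ from the corresponding parameter (and $\relstr{A}^{\ast}$ pp-interprets any expansion of $\relstr{A}$ by parameters, by transitivity). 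Since $\Pol(\relstr{A}^{\ast})=\Pol(\relstr{A},c_1,\dots,c_n)$ holds directly by the definition of the stabilizer, the first item of the theorem becomes: $\relstr{A}^{\ast}$ pp-interprets all finite structures.

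It then remains to assemble two standard facts from the introduction. The first is that, for finite structures $\relstr{C}$ and $\relstr{B}$, the structure $\relstr{C}$ pp-interprets $\relstr{B}$ if and only if there is a clone homomorphism $\Pol(\relstr{C})\to\Pol(\relstr{B})$; the second is that pp-interpretability is transitive, that $\relstr{M}$ pp-interprets every finite structure, and that $\Pol(\relstr{M})=\trivclone$, the clone of projections on the two-element set $\{0,1\}$. Now if $\relstr{A}^{\ast}$ pp-interprets all finite structures, then in particular it pp-interprets $\relstr{M}$, so there is a clone homomorphism $\Pol(\relstr{A},c_1,\dots,c_n)=\Pol(\relstr{A}^{\ast})\to\Pol(\relstr{M})=\trivclone$. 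Conversely, given a clone homomorphism $\Pol(\relstr{A},c_1,\dots,c_n)\to\trivclone=\Pol(\relstr{M})$, the same equivalence produces a primitive positive interpretation of $\relstr{M}$ in $\relstr{A}^{\ast}$, and composing it with primitive positive interpretations of arbitrary finite structures in $\relstr{M}$ shows that $\relstr{A}^{\ast}$ pp-interprets all finite structures. Together with the reduction of the previous paragraph this is the claimed equivalence. (One could just as well route the argument through $\relstr{L}$, or through $\relstr{K}_3$ composed with the canonical isomorphism between the projection clones on a three-element and on a two-element set.)

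The one ingredient that is not mere bookkeeping is the equivalence between pp-interpretability and the existence of a clone homomorphism, which I am treating as a black box. Its direction producing a clone homomorphism from a given interpretation is a routine verification; the substantive direction is the construction of an actual primitive positive interpretation of $\relstr{B}$ in $\relstr{C}$ out of a merely abstract, arity- and identity-preserving map $\Pol(\relstr{C})\to\Pol(\relstr{B})$. I expect this to be the main obstacle in a self-contained argument: it is where one needs the $\operatorname{Inv}$--$\Pol$ Galois connection of Geiger and of Bodnarchuk--Kaluzhnin--Kotov--Romov, so that the primitive positive definable relations over $\relstr{C}$ are exactly the invariants of $\Pol(\relstr{C})$, together with the realization of the finite structure $\relstr{B}$ inside a suitable primitive positive power of $\relstr{C}$, after which a Birkhoff-style argument matches identities with interpretations. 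Everything else --- the passage between parameters, singleton expansions and stabilizers, and the reduction to the single structure $\relstr{M}$ --- is formal.
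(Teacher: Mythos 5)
The paper offers no proof of this theorem --- it is imported verbatim from the cited classical literature (Geiger, Bodnarchuk--Kaluzhnin--Kotov--Romov, Birkhoff; cf.\ Bodirsky's habilitation) --- and your assembly is exactly the standard derivation those citations encode: identify pp-interpretation with parameters over $\relstr{A}$ with parameter-free pp-interpretation over the singleton expansion $\relstr{A}^{\ast}$, note $\Pol(\relstr{A}^{\ast})=\Pol(\relstr{A},c_1,\dots,c_n)$, reduce ``all finite structures'' to the single structure $\relstr{M}$ with $\Pol(\relstr{M})=\trivclone$, and invoke the $\operatorname{Inv}$--$\Pol$/Birkhoff equivalence between pp-interpretability and clone homomorphisms for finite structures. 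Your proposal is correct and takes essentially the same route the paper's citations presuppose, with the genuinely non-formal content correctly isolated in the black box you name.
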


For the second, algebraic statement of Theorem~\ref{thm:finite-birk} numerous equivalent algebraic criteria have been obtained within the setting of finite structures~\cite{T77,HM88,JBK,MM08,Sig10,Cyclic}, making in particular the failure of the condition more easily verifiable: this failure is then usually witnessed by the satisfaction of particular identities in $\Pol(\relstr{A},c_1,\ldots,c_n)$ which cannot be satisfied in $\trivclone$.

Some of the above-mentioned facts about finite domain CSPs have analogues for $\omega$-categorical structures. The complexity of $\CSP(\relstr{A})$ still only depends on the polymorphism clone $\Pol(\relstr{A})$ \cite{BodirskyNesetrilJLC}, and there is an analogue of Theorem~\ref{thm:finite-birk}, which however takes into consideration the natural topological structure of $\Pol(\relstr{A})$.

\begin{theorem}[\cite{Topo-Birk}]\label{thm:int}
The following are equivalent for an $\omega$-categorical structure $\relstr{A}$.
\begin{itemize}
\item $\relstr{A}$ pp-interprets all finite structures with parameters.
\item There exists a \emph{continuous} clone homomorphism from $\Pol(\relstr{A},$ $c_1,\ldots,c_n)$ to $\trivclone$, for some $c_1,\ldots,c_n\in A$.
\end{itemize}
\end{theorem}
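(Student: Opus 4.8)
The plan is to reduce the statement to the correspondence between primitive positive interpretations and continuous clone homomorphisms (``Topological Birkhoff'', \cite{Topo-Birk}), and then to indicate how that correspondence is proved. First one passes to the expanded structure $\relstr{C} := (\relstr{A}, c_1, \dots, c_n)$: its automorphism group is the pointwise stabilizer of $\{c_1, \dots, c_n\}$ in $\Aut(\relstr{A})$, which again has only finitely many orbits on $C^m$ for every $m$, so $\relstr{C}$ is $\omega$-categorical, and by definition $\Pol(\relstr{C}) = \Pol(\relstr{A}, c_1, \dots, c_n)$. Since $\relstr{A}$ pp-interprets a structure $\relstr{D}$ with the parameters $c_1, \dots, c_n$ precisely when $\relstr{C}$ pp-interprets $\relstr{D}$ without parameters, and since each of $\mathbb{K}_3$, $\mathbb{L}$, $\mathbb{M}$ pp-interprets every finite structure while pp-interpretability is transitive, the condition ``$\relstr{A}$ pp-interprets all finite structures with parameters'' is equivalent to ``$(\relstr{A}, c_1, \dots, c_n)$ pp-interprets $\mathbb{M}$ for a suitable choice of $c_1, \dots, c_n$''. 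As $\Pol(\mathbb{M}) = \trivclone$, it therefore suffices to prove, for an arbitrary $\omega$-categorical structure $\relstr{C}$,
\[
\relstr{C}\ \text{pp-interprets}\ \mathbb{M} \quad\Longleftrightarrow\quad \text{there is a continuous clone homomorphism } \Pol(\relstr{C}) \to \trivclone .
\]

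For the forward implication, fix a pp-interpretation of $\mathbb{M}$ in $\relstr{C}$ of some dimension $d$, with pp-definable domain $\delta(\relstr{C}) \subseteq C^d$, pp-definable kernel, pp-definable preimage of the ternary relation of $\mathbb{M}$, and coordinate map $h$ from $\delta(\relstr{C})$ onto $\{0,1\}$; fix representatives $\tuple{0}, \tuple{1} \in C^d$ with $h(\tuple{0}) = 0$ and $h(\tuple{1}) = 1$. For a $k$-ary $f \in \Pol(\relstr{C})$, let $\xi(f)$ be the $k$-ary operation on $\{0,1\}$ obtained by applying $f$ coordinatewise to the matching tuples among $\tuple{0}, \tuple{1}$ and reading off the value of $h$. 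That $f$ preserves $\delta$ and the kernel makes $\xi(f)$ well defined (independent of the representatives), that $f$ preserves the formula interpreting $\mathbb{M}$ forces $\xi(f) \in \Pol(\mathbb{M}) = \trivclone$, and a routine check shows $\xi$ respects arities, projections and composition. Finally, $\xi(f)$ depends only on the values of $f$ on the finitely many $k$-tuples arising as rows of the $d \times k$ matrices whose columns are $\tuple{0}$ or $\tuple{1}$, so each preimage $\xi^{-1}(\pi^k_j)$ is open and $\xi$ is continuous.

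The converse is the substantial direction, and I expect it to be the main obstacle: it is the $\omega$-categorical analogue of the nontrivial half of the finite Birkhoff-type theorem underlying Theorem~\ref{thm:finite-birk}. Given a continuous clone homomorphism $\xi \colon \Pol(\relstr{C}) \to \trivclone$, I would imitate the finite construction, realizing a finite structure with polymorphism clone $\trivclone$ (which therefore pp-interprets $\mathbb{M}$) as a quotient of a suitable finite-dimensional pp-power of $\relstr{C}$ --- that is, of a structure on a power $C^m$ whose relations are pp-definable in $\relstr{C}$. The domain of the interpretation, the defining equivalence, and the interpreted relations are taken to be appropriate $\Pol(\relstr{C})$-invariant subsets of powers of $C$, which are pp-definable in $\relstr{C}$ by the $\omega$-categorical Galois correspondence (a relation over an $\omega$-categorical structure is pp-definable iff it is preserved by all of its polymorphisms), and $\xi$ is exactly what certifies that the polymorphism clone of the quotient collapses to $\trivclone$. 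The delicate point, absent in the finite setting, is the topology: $\xi$ a priori constrains operations of all arities at once, and one must argue --- using continuity together with $\omega$-categoricity (Ryll--Nardzewski: only finitely many orbits of $m$-tuples for each $m$) --- that $\xi$ is approximated by finite data uniformly enough for the construction to terminate with a pp-power of finite dimension. Without $\omega$-categoricity this compactness, and the equivalence itself, breaks down, so this is where the hypothesis on $\relstr{C}$ enters essentially. Alternatively, one may simply invoke the general Topological Birkhoff theorem of \cite{Topo-Birk}, of which the displayed equivalence is an instance.
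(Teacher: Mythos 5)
The paper itself does not prove Theorem~\ref{thm:int}; it is imported wholesale from~\cite{Topo-Birk}, so there is no in-paper proof to compare against. Your reduction --- passing to the expansion $(\relstr{A}, c_1, \dots, c_n)$, noting that $\Pol(\relstr{A}, c_1, \dots, c_n)$ is its polymorphism clone and that the expansion remains $\omega$-categorical, and replacing ``all finite structures'' by the single structure $\mathbb{M}$ using transitivity of pp-interpretation --- is correct, and your construction of a continuous clone homomorphism $\Pol(\relstr{C}) \to \Pol(\mathbb{M}) = \trivclone$ from a pp-interpretation of $\mathbb{M}$ is the standard argument; the continuity observation (that $\xi(f)$ depends only on finitely many values of $f$, determined by the $d$-tuples representing $0$ and $1$) is right. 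For the converse you correctly identify the statement as an instance of the Topological Birkhoff theorem of~\cite{Topo-Birk} and sketch the right shape of the argument (a finite-dimensional pp-power, the $\omega$-categorical Bodirsky--Ne\v{s}et\v{r}il Galois correspondence to pp-define the invariant sets, and a compactness argument exploiting continuity and Ryll--Nardzewski to make the construction finite), ultimately deferring to the reference. Since the paper does exactly that, your proposal matches the paper's level of detail; just be aware that the converse is the substantial content of~\cite{Topo-Birk} and is nowhere re-derived in the present paper, so a self-contained write-up would require genuinely more work there.
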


More generally, the complexity of $\CSP(\relstr{A})$ for $\omega$-categorical structures provably only depends on the structure of $\Pol(\relstr{A})$ as a topological clone~\cite{Topo-Birk}. A natural, yet unresolved problem when comparing the finite with the $\omega$-categorical setting then
is whether the topological structure of the polymorphism clone is really essential in the infinite, or whether the abstract algebraic structure, i.e., the identities that hold in $\Pol(\relstr{A})$, is sufficient to determine the complexity of the CSP. 

\subsection{The result}
We show that the borderline proposed in Conjecture~\ref{conj:dicho} \emph{is} purely algebraic. In particular, if the conjecture is true, then the complexity of CSPs over structures concerned by the conjecture only depends on the identities which hold in the polymorphism clone of their core, rather than the additional topological structure thereof. Moreover, the borderline is characterized by a single
 simple identity generalizing that of~\cite{Sig10}. We show the following.

\begin{theorem}\label{thm:main}
The following are equivalent for an $\omega$-categorical core structure $\relstr{A}$. 
\begin{itemize}
\item[(i)] There exists no continuous clone homomorphism $\Pol(\relstr{A},c_1,\ldots,$ $c_n) \To\trivclone$, for any $c_1,\ldots,c_n\in\relstr{A}$.
\item[(ii)] There exists no clone homomorphism $\Pol(\relstr{A},c_1,\ldots,c_n)\To\trivclone$, for any $c_1,\ldots,c_n\in\relstr{A}$.
\item[(iii)] $\Pol(\relstr{A})$ contains a pseudo-Siggers operation, i.e., a 6-ary operation $s$ such that 
\[
\alpha s(x,y,x,z,y,z) \approx \beta s(y,x,z,x,z,y)
\] 
for some unary operations $\alpha, \beta \in \Pol(\relstr{A})$.
\end{itemize}
\end{theorem}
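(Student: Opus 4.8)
The plan is to close the cycle of implications $(iii)\Rightarrow(ii)\Rightarrow(i)\Rightarrow(iii)$. Of these, $(ii)\Rightarrow(i)$ is immediate, since every continuous clone homomorphism is in particular a clone homomorphism. For $(iii)\Rightarrow(ii)$ I would first observe that a pseudo-Siggers operation ascends to every stabilizer. Suppose $s\in\Pol(\relstr{A})$ is $6$-ary with $\alpha s(x,y,x,z,y,z)\approx\beta s(y,x,z,x,z,y)$ for unary $\alpha,\beta\in\Pol(\relstr{A})$, and fix $c_1,\dots,c_n\in A$. Since $\relstr{A}$ is an $\omega$-categorical core, every endomorphism agrees with some automorphism on any prescribed finite set; applied to the endomorphism $x\mapsto s(x,\dots,x)$ this yields $g\in\Aut(\relstr{A})$ with $g(c_i)=s(c_i,\dots,c_i)$, and applied to $x\mapsto\alpha(s(x,\dots,x))$ — noting that the original identity read at $x=y=z=c_i$ forces $\alpha$ and $\beta$ to agree on $s(c_i,\dots,c_i)$ — it yields $h\in\Aut(\relstr{A})$ with $h(c_i)=\alpha(s(c_i,\dots,c_i))$. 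Then $s':=g^{-1}\circ s$, $\alpha':=h^{-1}\circ\alpha\circ g$ and $\beta':=h^{-1}\circ\beta\circ g$ all fix $c_1,\dots,c_n$, hence lie in $\Pol(\relstr{A},c_1,\dots,c_n)$, and a routine check shows they still satisfy $\alpha' s'(x,y,x,z,y,z)\approx\beta' s'(y,x,z,x,z,y)$. If $\xi\colon\Pol(\relstr{A},c_1,\dots,c_n)\To\trivclone$ were a clone homomorphism, applying it to this identity would force the projection $\xi(s')$, of arity $6$, to select a coordinate $i$ for which the $i$-th entry of $(x,y,x,z,y,z)$ equals the $i$-th entry of $(y,x,z,x,z,y)$; but these two tuples of variables differ in every one of the six coordinates, a contradiction. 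Hence no such $\xi$ exists, which is $(ii)$.

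The remaining implication $(i)\Rightarrow(iii)$ is the heart of the theorem, and I would prove its contrapositive: assuming $\Pol(\relstr{A})$ has no pseudo-Siggers operation, produce a continuous clone homomorphism from some stabilizer onto $\trivclone$. By Theorem~\ref{thm:int} it suffices to show that $\relstr{A}$ pp-interprets all finite structures with parameters — equivalently, since $\relstr{A}$ is a core, that it pp-interprets (hence pp-constructs) $\relstr{M}$ — so that the desired continuous homomorphism is then supplied by that theorem and continuity comes for free. Via the correspondence between pp-constructions and minion homomorphisms (the reflection machinery), together with~\cite{BPP-projective-homomorphisms} to mediate between clone homomorphisms of stabilizers and minion homomorphisms of the full polymorphism clone, this reduces to an infinite-domain, ``pseudo'' form of the Siggers/loop-lemma theorem: the failure of the single identity in $(iii)$ must be converted into a global map $\Pol(\relstr{A})\To\trivclone$ preserving arities and minors.

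To establish this I would exploit $\omega$-categoricity to render the problem finitary: for each $n$ the set $A^n$ splits into only finitely many orbits, and I would attach to each finite ``level'' of this orbit data a finite algebra — obtained, as in the $(iii)\Rightarrow(ii)$ step, by correcting polymorphisms by automorphisms so as to work with idempotent operations on the relevant finite configurations — to which the finite Siggers theorem (Theorem~\ref{thm:finite-birk}) applies: the global absence of a pseudo-Siggers operation forces the absence of a Siggers term at each level, and hence a clone homomorphism of each such finite algebra to $\trivclone$. One then patches these local coordinate-selections, by a compactness (Kőnig's lemma) argument over the directed system of levels, into the required minion homomorphism. The main obstacle I anticipate is exactly this patching step: composition of polymorphisms is not a local operation, and polymorphisms do not restrict to finite domains, so considerable care is needed to set up the finite levels and the transition maps so that the finite Siggers data are mutually coherent across arities and across growing finite sets, and so that the patched map carries the continuity needed to feed it back into Theorem~\ref{thm:int} and~\cite{BPP-projective-homomorphisms}. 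Once this is carried out, $\neg(iii)\Rightarrow\neg(i)$ follows; combined with the implications of the preceding paragraph, this closes the cycle and proves the equivalence of $(i)$, $(ii)$, and $(iii)$.
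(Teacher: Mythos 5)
Your treatment of $(iii)\Rightarrow(ii)$ and $(ii)\Rightarrow(i)$ is correct and is essentially the paper's argument: correct $s,\alpha,\beta$ by automorphisms (using density of $\Aut(\relstr{A})$ in the endomorphisms on the finite set $\{c_1,\dots,c_n\}$) to land in the stabilizer, then observe that no projection can satisfy the Siggers identity since $(x,y,x,z,y,z)$ and $(y,x,z,x,z,y)$ differ in every coordinate.

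The implication $(i)\Rightarrow(iii)$, however, has a genuine gap, and moreover the logic of your reduction runs backwards. First, the compactness direction: the natural local-to-global argument (which the paper carries out in Lemma~\ref{lem:localglobal}) shows that if \emph{every} finite $A\subseteq C$ admits a local pseudo-Siggers operation, then a global one exists. Its contrapositive therefore says that the absence of a global pseudo-Siggers operation yields \emph{some} finite level without local pseudo-Siggers data --- not, as you claim, that it ``forces the absence of a Siggers term at each level.'' So there is nothing to patch by K\H{o}nig's lemma; the real task is to extract, from a single failing finite level, a pp-interpretation of $\relstr{K}_3$ with parameters. The paper does this by encoding the failing level as a symmetric $\clone{C}$-invariant binary relation on $C^k$ generated by the six concatenations of the tuples $\tuple{a}^x,\tuple{a}^y,\tuple{a}^z$ (Lemma~\ref{lem:1}), whose pseudoloops are exactly local pseudo-Siggers witnesses, and then proving a new \emph{pseudoloop lemma} for oligomorphic graph-group systems (Section~\ref{sect:loop}, a nontrivial infinite generalization of the Hell--Ne\v{s}et\v{r}il/Bulatov loop lemma). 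This lemma is the technical heart of the theorem and is entirely absent from your sketch; the finite Siggers theorem cannot be invoked as a black box because the relevant graphs live on the infinite sets $C^k$ and there is no well-defined finite algebra on the orbit data. Second, even if your patching produced a map, it would at best be an h1/minion homomorphism $\Pol(\relstr{A})\To\trivclone$ (condition (6) of the concluding corollary), which is strictly weaker than the continuous clone homomorphism from a stabilizer needed to negate $(i)$; the implication from (6) back to (3) is explicitly listed in the paper as an open problem, so this route cannot close the cycle as stated.
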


Consequently, the missing piece for proving Conjecture~\ref{conj:dicho} can now be stated in purely algebraic terms. 

\begin{conjecture} \label{conj:newdicho}
Let $\relstr{A}$ be the core of a reduct of a finitely bounded  homogeneous structure.
If $\Pol(\relstr{A})$ contains a pseudo-Siggers operation, then $\CSP(\relstr{A})$ is solvable in polynomial time. 
\end{conjecture}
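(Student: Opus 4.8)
The plan is to attack Conjecture~\ref{conj:newdicho} by following, for infinite domains, the blueprint of the finite algebraic dichotomy theorem, but with the model-theoretic and Ramsey-theoretic enhancements that have become standard for reducts of finitely bounded homogeneous structures. The first step is to move from the raw pseudo-Siggers identity to a richer ``Maltsev-type condition'' usable by an algorithm. Concretely, I would show that the presence of a pseudo-Siggers operation in $\Pol(\relstr{A})$ — equivalently, by Theorem~\ref{thm:main}, the failure of a continuous homomorphism to $\trivclone$ from every stabilizer — forces $\Pol(\relstr{A})$ to contain, in a suitable ``pseudo'' (unary-operation-decorated) sense, much stronger symmetric operations: canonical (with respect to the homogeneous expansion) pseudo-cyclic or pseudo-weak-near-unanimity operations of all arities, and ultimately canonical operations behaving like those used in the bounded-width and few-subpowers algorithms. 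The key technical device here is the canonization/Ramsey machinery of~\cite{BP-reductsRamsey}: any operation in $\Pol(\relstr{A})$ can be locally interpolated by one that is canonical over the finitely bounded homogeneous expansion, so abstract identities upgrade to identities between canonical functions, whose behaviour on the finitely many orbits of tuples is governed by a genuinely finite algebra.

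The second step is the dichotomy \emph{on the finite level}: the finitely many orbits of $k$-tuples, together with the action induced by the canonical polymorphisms, form a finite relational structure (or a finite family of them, one per arity), and I would invoke the finite algebraic dichotomy theorem — now that the Feder--Vardi conjecture is a theorem — to conclude that, absent a homomorphism to $\trivclone$, this finite ``orbit structure'' admits a bounded-width algorithm or a few-subpowers algorithm, hence in particular satisfies some combination of a near-unanimity condition, a cyclic condition, or an edge/cube term. The crucial point is to transfer such an algorithm back to $\relstr{A}$: one decomposes an instance of $\CSP(\relstr{A})$ into its ``type skeleton'' — the induced instance over orbits of tuples, which is finite and solvable by the finite algorithm — plus the residual problem of realizing a fixed type assignment, which by $\omega$-categoricity and the canonical-polymorphism structure is itself a CSP of lower complexity (e.g. a combination of linear-equation-like and tree-like constraints) solvable in polynomial time. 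This mirrors the ``sampling + local consistency'' paradigm already used for the reducts of $(\mathbb{Q};<)$, of the random graph, and of the $C$-relation, and the hope is that the pseudo-Siggers hypothesis is exactly what guarantees the finite orbit structure never degenerates to an NP-hard one.

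The third step handles the interaction between the unary operations $\alpha,\beta$ (and the stabilizing constants $c_1,\dots,c_n$) and the algorithm: one must check that passing to a stabilizer $\Pol(\relstr{A},c_1,\dots,c_n)$, which only removes finitely many points' worth of symmetry, changes neither the finiteness of the orbit structure nor the polynomial bound, and that the ``pseudo'' decoration by $\alpha,\beta$ can be absorbed because $\Aut(\relstr{A})$ acts oligomorphically — so ranging over the (finitely many, per arity) composites $\alpha\circ s$, $\beta\circ s$ still yields canonical witnesses. One then glues the finitely many cases into a single polynomial-time procedure.

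\textbf{Main obstacle.}
The hard part will be Step~2, the transfer of a \emph{quantitative} algorithmic guarantee from the finite orbit structure to $\relstr{A}$ itself. Having the right identities (Step~1) only tells us the orbit structure is tractable in principle; it is a separate and genuinely difficult matter to turn a bounded-width or few-subpowers certificate on the orbit structure into a correct and polynomial algorithm for the infinite structure, because local consistency on orbits need not imply global satisfiability on points, and the residual ``realize this type assignment'' problem can itself encode, for instance, linear equations over $\mathbb{Q}$ or over a finite field, or acyclicity constraints, whose polynomial solvability must be established uniformly and then interleaved with the finite algorithm without losing consistency. Indeed it is precisely this step that remains open in full generality, which is why the statement is a conjecture; the contribution of the present paper is to have reduced it, via Theorem~\ref{thm:main}, to this purely algorithmic-algebraic core, stripping away the topological considerations that previously obscured the borderline.
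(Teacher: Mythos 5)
The statement you are addressing is Conjecture~\ref{conj:newdicho}, which the paper explicitly leaves \emph{open}: it is the reformulated ``missing piece'' of the infinite-domain tractability conjecture, not a theorem, and the paper offers no proof of it. What you have written is a research programme, not a proof, and you concede as much in your final paragraph (``it is precisely this step that remains open in full generality, which is why the statement is a conjecture''). Every substantive claim in your outline is unestablished. In Step~1, the passage from an abstract pseudo-Siggers operation to \emph{canonical} pseudo-cyclic or pseudo-weak-near-unanimity operations over the homogeneous expansion is not a known consequence of the Ramsey/canonization machinery: canonization interpolates a function locally by a canonical one, but it is not known that this process preserves the pseudo-Siggers identity, nor that reducts of finitely bounded homogeneous structures always admit the required Ramsey expansions with the uniformity you need. (The paper itself notes, in Section~\ref{sect:disc}, that even the purely syntactic strengthening to cyclic-type identities provably fails for reducts of $(\mathbb{Q};<)$, so your claimed upgrade to ``much stronger symmetric operations'' is false as stated.) In Step~2, the transfer of a bounded-width or few-subpowers algorithm from the finite orbit structure back to $\relstr{A}$ is exactly the open algorithmic core; asserting that the residual type-realization problem ``is itself a CSP of lower complexity'' is an unproven hypothesis, not a lemma.

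To be clear about what the paper actually does: its contribution is Theorem~\ref{thm:main}, which shows that the \emph{hypothesis} of Conjecture~\ref{conj:newdicho} (existence of a pseudo-Siggers polymorphism) is equivalent to the non-existence of (continuous) clone homomorphisms from stabilizers to $\trivclone$, thereby making the conjectured tractability borderline purely algebraic. It does not, and does not claim to, derive polynomial-time solvability from that hypothesis. Your proposal correctly echoes the strategy sketched in the introduction (lifting finite algorithms via Ramsey theory, as in~\cite{Pin15} and the known classifications for reducts of $(\mathbb{Q};<)$, the random graph, and the $C$-relation), but as a proof of the conjecture it has a gap coextensive with the conjecture itself.
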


In a proposed strategy~\cite{Pin15} for solving Conjecture~\ref{conj:dicho}, the first step asked to prove that for an $\omega$-categorical structure $\relstr{A}$, the existence of a clone homomorphism $\Pol(\relstr{A})\To\trivclone$ implies the existence of a continuous such homomorphism (cf.~\cite{BPP-projective-homomorphisms}). If this was true, then the failure of the first item of Conjecture~\ref{conj:dicho} would have an algebraic witness, i.e., a non-trivial identity holding in some polymorphism clone $\Pol(\relstr{A},c_1,\ldots,c_n)$. The idea is to then, roughly speaking, ``lift" the algorithm for finite structures whose polymorphism clone satisfies this identity (assuming the finite tractability conjecture is true) via Ramsey theory to show that $\CSP(\relstr{A})$ is tractable.

While we do not answer this question, Theorem~\ref{thm:main} gives an answer for the variant which is actually relevant for the CSP: for an $\omega$-categorical core structure $\relstr A$, the existence of a clone homomorphism $\Pol(\relstr{A})\To\trivclone$ implies the existence of a continuous clone homomorphism from \emph{some stabilizer} of $\Pol(\relstr{A})$ to $\trivclone$. Taking into account the existence of non-continuous clone homomorphisms $\Pol(\relstr{A})\To\trivclone$~\cite{BPP-projective-homomorphisms}, for an $\omega$-categorical $\relstr{A}$, as well as the recent discovery of $\omega$-categorical structures $\relstr{A}, \relstr{A}'$ whose polymorphism clones are isomorphic algebraically, but not topologically~\cite{BEKP}, 
it might very well turn out that the answer to the original question is negative, but, as we would then see a posteriori, irrelevant for CSPs.

Let us also remark that Theorem~\ref{thm:main} is, by the fact that every $\omega$-categorical structure has a unique $\omega$-categorical core, a statement about all $\omega$-categorical structures, rather than only the structures concerned by Conjecture~\ref{conj:dicho}. Theorem~\ref{thm:main} is therefore remarkable in that non-trivial statements about the class of all $\omega$-categorical structures, other than the fundamental theorem of Ryll-Nardzewski, Engeler, and Svenonius characterizing them,  are practically non-existent.

\subsection{Outline and proof strategy}

The strategy for proving Theorem~\ref{thm:main} is similar to the finite analogue of Theorem~\ref{thm:main} proved in~\cite{Sig10} (see also~\cite{KMM14}).
Siggers's reasoning is based on a ``loop lemma'' from Bulatov's paper~\cite{B05} that refines the dichotomy theorem for finite undirected graphs~\cite{HellNesetril}. 

After providing definitions and notation in Section~\ref{sect:notation}, we start our proof in Section~\ref{sect:loop} with a generalization of the loop lemma, the \emph{pseudoloop lemma}, using some of the ideas from~\cite{B05}. Instead of finite graphs we work with infinite objects which we call graph-group-systems, and which can be imagined as an infinite permutation group acting on an infinite fuzzy graph, which is a finite graph modulo the action. Theorem~\ref{thm:main} is then derived from the pseudoloop lemma in Section~\ref{sect:main} using a standard universal algebra technique adapted to the $\omega$-categorical setting via a compactness argument. 

Section~\ref{sect:disc} at the end of the article contains further discussion of our results in the light of other recent results, in particular from the wonderland of reflections~\cite{wonderland}, as well as inspiration for future work.

\section{Definitions and Notation}\label{sect:notation}

Relational structures are denoted by blackboard bold letters, such as $\relstr{A}$, and their domain by the same letter in the plain font, such as $A$.
By a \emph{graph} we mean a relational structure with a single symmetric binary relation. 

\subsection{The range of the infinite CSP conjecture}  
A relational structure $\relstr{B}$ is \emph{homogeneous} if every isomorphism between finite induced substructures extends to an automorphism of the entire structure $\relstr{B}$. In that case, $\relstr{B}$ is uniquely determined, up to isomorphism, by its \emph{age}, i.e., the class of its finite induced substructures up to isomorphism. $\relstr{B}$ is \emph{finitely bounded} if its signature is finite and its age is given by a finite set $\forb{F}$ of forbidden finite substructures, i.e., the age consists precisely of those finite structures in its signature which do not embed any member of $\forb{F}$. A \emph{reduct} of a structure $\relstr{B}$ is a structure $\relstr{A}$ on the same domain which is first-order definable without parameters in $\relstr{B}$. Reducts $\relstr{A}$ of finitely bounded homogeneous structures are \emph{$\omega$-categorical}, i.e., the up to isomorphism unique countable model of their first-order theory. Equivalently, their automorphism groups are \emph{oligomorphic}: they have finitely many orbits in their action on $n$-tuples over $\relstr{A}$, for every finite $n\geq 1$.

\subsection{ pp-formulas and interpretations} A formula is \emph{primitive positive}, in short \emph{pp}, if it contains only equalities, existential quantifiers, conjunctions, and atomic formulas -- in our case, relational symbols. A pp-formula \emph{with parameters} can contain, in addition, elements of the domain. 

A \emph{pp-interpretation} is a first-order interpretation in the sense of model theory where all the involved formulas are primitive positive: 
a structure $\relstr{A}$ \emph{pp-interprets} $\relstr{B}$ if there exists a partial mapping $f$ from a finite power $A^n$ to $B$ such that
the domain of $f$, the $f$-preimage of the equality relation and the $f$-preimage of every relation in $\relstr{B}$ is pp-definable in $\relstr{A}$. In particular, $\relstr{A}$ pp-interprets its substructures induced by pp-definable subsets and also its quotients modulo a pp-definable equivalence relation. 

\subsection{Cores} An $\omega$-categorical structure $\relstr{A}$ is a \emph{core}, also called \emph{model-complete core}, if all of its endomorphisms are elementary self-embeddings, i.e., preserve all first-order formulas over the structure. This is the case if and only if its automorphism group is dense in its endomorphism monoid with respect to the pointwise convergence topology on functions on $A$; cf.~Section~\ref{sect:top} for a description of the latter. Two structures $\relstr{A}, \relstr{B}$ are \emph{homomorphically equivalent} if there exist homomorphisms from $\relstr{A}$ into $\relstr{B}$ and vice-versa.

\subsection{Clones} 
A \emph{function clone} $\clone{C}$ is a set of finitary operations on a fixed set $C$ which contains all projections and which is closed under composition. 
A \emph{polymorphism} of  a relational structure $\relstr{A}$ is a finitary operation $f(x_1,\ldots,x_n)$ on $A$  which \emph{preserves} all relations $R$ of $\relstr{A}$: this means that for all $\tuple{r}_1,\ldots,\tuple{r}_n\in R$ we have that $f(\tuple{r}_1,\ldots,\tuple{r}_n)$, calculated componentwise, is again in $R$. The \emph{polymorphism clone} of $\relstr{A}$, denoted by $\Pol(\relstr{A})$, consists of all polymorphisms of $\relstr{A}$, and is always a function clone. Its unary operations are precisely the endomorphisms of $\relstr{A}$, and its invertible unary operations are precisely the automorphisms of $\relstr{A}$.

A \emph{clone homomorphism} is a mapping from one function clone to another which preserves arities, composition, and which sends every projection of its domain to the corresponding projection of its co-domain. Clone homomorphisms preserve all identities which hold in a function clone, as defined in the introduction.

\subsection{Topology}\label{sect:top}

Function clones carry a natural topology, the topology of \emph{pointwise convergence}, for which a subbasis is given by sets of functions which agree on a fixed finite tuple; the functions of a fixed arity in a function clone form a clopen set. Equivalently, the domain of a function clone is taken to be discrete, and the $n$-ary functions in the clone equipped with the product topology, for every $n\geq 1$; the whole clone is then the sum space of the spaces of $n$-ary functions.

We always understand continuity of clone homomorphisms with respect to this topology.

The function clones which are closed in the topology of pointwise convergence are precisely the polymorphism clones of relational structures. 

\subsection{Core clones and oligomorphicity}

We say that a closed function clone is a \emph{core} 
if  it is the polymorphism clone of a core. 

A function clone $\clone{C}$ is \emph{oligomorphic} if and only if  the permutation group $\clgr{\clone{C}}$ of unary invertible elements of $\clone{C}$ is oligomorphic. When $\clone{C}$ is closed, then this is the case if and only if it is the polymorphism clone of an $\omega$-categorical structure, in which case $\clgr{\clone{C}}$ consists of the automorphisms of that structure. 

Note that when $\clone{C}$ is a core, then the set of its unary operations is the closure of $\clgr{\clone{C}}$.
 

\subsection{Pseudo-Siggers operations} A 6-ary operation $s$ in a function clone $\clone{C}$ is a \emph{pseudo-Siggers operation} if there exist unary $\alpha, \beta \in \clone{C}$ 
such that $\alpha s(x,y,x,z,y,z) \approx \beta s(y,x,z,x,z,y)$ holds  (where $\approx$ means that equality holds for all values for the variables in $C$). We then also say that $s$ \emph{satisfies} the pseudo-Siggers identity.

\section{The Pseudoloop Lemma}\label{sect:loop}

The following definition is a generalization of finite graphs to the $\omega$-categorical which is suitable for our purposes.

\begin{definition}\label{defn:gg}
A \emph{graph-group-system}, in short \emph{gg-system}, is a pair $(\relstr{G},\group{G})$, where $\group{G}$ is a permutation group on a set $G$, and $\relstr{G}=(G;R)$ an (undirected) graph which is invariant under $\group{G}$. We also write $(R,\group{G})$ for the same gg-system. 

The system is called \emph{oligomorphic} if $\group{G}$ is; in that case, $\relstr{G}$ is $\omega$-categorical, since its automorphism group contains $\group{G}$ and hence is oligomorphic. 

The system \emph{pp-interprets (pp-defines)} a structure $\relstr{B}$ if $\relstr{G}$ together with the orbits of $\group{G}$ on finite tuples does.

A \emph{pseudoloop} of a gg-system $(\relstr{G},\group{G})$ is an edge of $\relstr{G}$ of the form $(a,\alpha(a))$, where $\alpha\in\group{G}$. 
\end{definition}

Note that $R$, as well as any relation that is first-order definable from a gg-system $(\relstr{G},\group{G})$, is invariant under the natural action of $\group{G}$ on tuples. In particular, such relations are unions of orbits of the action of $\group{G}$ on tuples, and 
when $\group{G}$ is oligomorphic, then there are only finite many first-order definable relations of any fixed arity. 

We are now ready to state our pseudoloop lemma for gg-systems.

\begin{lemma}[The pseudoloop lemma]\label{lem:loop}
Let $(\relstr{G},\group{G})$ be an oligomorphic gg-system, where $\relstr{G}$ has a subgraph isomorphic to $\relstr{K}_3$. Then either it {pp}-interprets $\relstr{K}_3$ with parameters, or it contains a pseudoloop.
\end{lemma}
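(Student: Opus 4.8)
The plan is to follow the strategy of Bulatov's loop lemma~\cite{B05}, adapting each step to the oligomorphic setting by replacing finiteness arguments with compactness and replacing ``vertex'' by ``orbit of vertices.'' First I would set up the ``finite quotient'': since $\group{G}$ is oligomorphic, the orbits $O_1,\dots,O_k$ of $\group{G}$ on $G$ are finite in number, and the $\group{G}$-invariant graph $\relstr{G}$ induces a finite directed graph on the orbit set, where $(O_i,O_j)$ is an edge iff some (equivalently, by invariance, a definable subset of) pairs in $O_i\times O_j$ lie in $R$. A pseudoloop is then essentially detected by a loop on this finite quotient together with the right coherence condition, so the whole problem becomes a statement about this finite object decorated by the permutation group. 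I would then run Bulatov's machinery on the quotient: pass to a ``smallest'' subsystem with no pseudoloop that still pp-interprets the relevant structures (using that pp-definable subsets and quotients of a gg-system are again gg-systems, as recorded after Definition~\ref{defn:gg}), analyze its algebraic structure, and show that the absence of a pseudoloop forces enough rigidity to pp-interpret $\relstr{K}_3$ with parameters.

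The key steps, in order, would be: (1) Reduce to a ``core-like'' situation: restrict to a minimal pp-definable-with-parameters subgraph $\relstr{G}'$ of $\relstr{G}$ (minimal among those still containing a triangle), so that every pp-definable-with-parameters nonempty subset either is all of $G'$ or already witnesses what we want. (2) Using that $\relstr{G}'$ has a triangle, build a ternary ``majority-obstruction'' or a suitable binary pp-definable relation $\Theta$ on orbits that behaves like an equivalence relation or like a rectangularity witness, exactly as in the finite loop lemma; here the oligomorphicity guarantees only finitely many candidate relations of each arity, so one can argue by minimality. (3) Case split as in~\cite{B05}: either some pp-definable relation is ``thin'' enough that iterating it produces a loop $(a,\alpha(a))$ — this is where a compactness argument enters, since a loop need not appear at any finite stage but the failure of all finite stages to close up, together with $\omega$-categoricity, would let us build by König's lemma / a type-compactness argument a genuine pseudoloop — or else the relations are ``rectangular,'' in which case the quotient by the resulting pp-definable equivalence relation pp-interprets $\relstr{K}_3$ with parameters. (4) Assemble: a pseudoloop in $\relstr{G}'$ gives a pseudoloop in $\relstr{G}$ (the subgraph is $\group{G}$-invariant only after we also shrink the group to a stabilizer, so care is needed that $\alpha$ can be taken in the original $\group{G}$ — this is handled by the parameters, which is why the conclusion is ``with parameters'').

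The main obstacle I expect is step (3): in the finite setting the absence of a loop after finitely many iterations of a relation immediately gives a contradiction, whereas here a pseudoloop is an edge $(a,\alpha(a))$ with $\alpha\in\group{G}$, and there is no a priori bound on the ``complexity'' of $\alpha$ needed. One must therefore replace Bulatov's induction on the size of the graph by an argument that simultaneously (a) uses oligomorphicity to keep the relevant data (orbits, pp-definable relations, the induced finite quotient graph) finite, and (b) uses a compactness/König-type argument over the countable domain to extract an actual automorphism $\alpha$ closing an edge into a pseudoloop once all finite approximations fail. Making these two finiteness regimes interact cleanly — the ``combinatorial finiteness'' of the orbit quotient versus the ``topological compactness'' needed to produce $\alpha$ — is the technical heart; everything else is a faithful transcription of the finite loop lemma, using the already-noted facts that gg-systems are closed under pp-definable subsets and quotients and that first-order definable relations over an oligomorphic system are unions of finitely many orbits.
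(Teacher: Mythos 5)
Your text is a plan rather than a proof: the steps that carry all the mathematical weight --- ``run Bulatov's machinery on the quotient'', ``build a ternary majority-obstruction or a suitable binary relation $\Theta$ \dots exactly as in the finite loop lemma'', ``either some relation is thin enough \dots or else the relations are rectangular'' --- are exactly the parts whose adaptation to the oligomorphic setting is the content of the lemma, and you do not supply them. The paper's actual route is: (a) pass to a minimal pseudoloop-free system and arrange that every edge lies in a $\relstr{K}_3$; (b) introduce the \emph{diamond-connectedness} relation $\sim$, which is pp-definable because oligomorphicity bounds the number of first-order definable binary relations; (c) prove the key claim (Step~2) that $\sim$-related vertices are never $R$-related to orbit-mates of one another, by induction on the length of the diamond chain using minimality; (d) quotient by $\sim$ to obtain a diamond-free, pseudoloop-free system; (e) show (Lemma~\ref{lem:Tk_bound}, itself a page of work) that such a minimal system cannot contain copies of $\relstr{T}_k=\relstr{K}_3^k$ once $3^k$ exceeds the number of orbits, take a maximal copy of some $\relstr{T}_k$, pp-define its vertex set with parameters via the polymorphism criterion of~\cite{BodirskyNesetrilJLC} and Bulatov's Claim~3, and finish with the pp-interpretation of $\relstr{K}_3$ in $\relstr{T}_k$. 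None of (b)--(e) is visible in your outline, and (c) and (e) are the technical heart.

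Separately, the ``main obstacle'' you identify in step (3) is not an obstacle at all, and the compactness/K\"onig argument you propose there is misplaced. Since $R$ is $\group{G}$-invariant it is a union of orbits of pairs, so a pseudoloop exists if and only if $R$ meets $O\times O$ for some orbit $O$ of $\group{G}$ --- a condition read off directly from the finite orbit quotient; there is no issue about the ``complexity'' of $\alpha$, since any edge whose endpoints lie in a common orbit is by definition a pseudoloop. Compactness does enter the paper, but only later (Lemma~\ref{lem:localglobal}) to globalize local pseudo-Siggers operations, not inside the pseudoloop lemma. So the proposal both omits the genuine difficulties and manufactures a spurious one; as it stands it cannot be completed into a proof without independently rediscovering essentially all of Section~\ref{sect:loop}.
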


For the proof of Lemma~\ref{lem:loop}, we need the following auxiliary definitions.

\begin{definition}
Let $(\relstr{G},\group{G})$ be a gg-system. The \emph{support} of $(\relstr{G},\group{G})$ are those elements of its domain which are contained in an edge.

For $a_1,\ldots,a_n \in G$, we denote by $O(a_1,\ldots,a_n)$ the orbit of the tuple $(a_1,\ldots,a_n)$ under $\group{G}$.
\end{definition}

The support of a gg-system $(\relstr{G},\group{G})$ is a union of $\group{G}$-orbits by the remark below Definition~\ref{defn:gg}. This justifies the following definition.

\begin{definition}
A pseudoloop-free gg-system $(\relstr{G},\group{G})$ containing a $\relstr{K}_3$ is \emph{minimal} if it does not pp-define a symmetric relation $R'$ on $G$ such that $(R',\group{G})$ is a pseudoloop-free gg-system containing a $\relstr{K}_3$ whose support consists of fewer orbits. 
\end{definition}

We can now prove the pseudoloop lemma.

\begin{proof}[Proof of Lemma~\ref{lem:loop}]
Assuming that a gg-system $(\relstr{G},\group{G})$, where $\relstr{G} = (G;R)$, has no pseudoloop, we show that it {pp}-interprets $\relstr{K}_3$ with parameters. \bigskip

\noindent \textit{Step 0}: If $(\relstr{G},\group{G})$ is not minimal, then we can replace it by a minimal gg-system. We thus henceforth assume that it is minimal.\bigskip

\noindent \textit{Step 1}: $R$ pp-defines a symmetric binary relation $R'$ with the property that every edge of $R'$ is contained in a $\relstr{K}_3$, i.e.,  every element of $R'$ is contained in an induced  subgraph of $(G;R')$ isomorphic to $\relstr{K}_3$, and which still shares our assumptions on $R$: 
$$
R'(x,y):\leftrightarrow \exists z\; R(x,y)\wedge R(x,z)\wedge R(y,z).
$$
Hence, replacing $R$ by $R'$, we henceforth assume that every edge of $R$ is contained in a $\relstr{K}_3$.\bigskip

\noindent 
In the following, for $n\geq 1$ we say that $x,y\in G$ are \emph{$n$-diamond-connected}, denoted by $x\sim_n y$, if there exist $a_1$, $b_1$, $c_1$, $d_1$, $\ldots$, $a_n$, $b_n$, $c_n$, $d_n\in G$ such that, for every $1\leq i\leq n$, both $a_i,b_i,c_i$ and $b_i,c_i,d_i$ induce $\relstr{K}_3$, 
$x=a_1$, $d_1=a_2$, $d_2=a_3$, \dots, $d_{n-1} = a_{n}$, and $d_n=y$. They are \emph{diamond-connected}, denoted by $x\sim y$, if they are $n$-diamond-connected for some $n\geq 1$. 

Observe that $\sim_n$  is a pp-definable relation from $R$ (since our definition is in fact a pp-definition). Also recall that there are only finitely many binary relations first-order definable from $R$, and note that if $x,y$ are $n$-diamond-connected, then they are $m$-diamond-connected for all $m\geq n$. Therefore, there exists an $n\geq 1$ such that $x,y$ are diamond-connected if and only if they are $n$-diamond connected. In particular, the relation $x\sim y$ is pp-definable in $\relstr{G}$. Note also that it is an equivalence relation on the support of $R$: it is clearly transitive and symmetric, and it is reflexive since on the support every vertex is contained in a $\relstr{K}_3$, by Step~1.\bigskip

\noindent \textit{Step 2}: We claim that if $x, y\in G$ are $n$-diamond-connected for some $n \geq 1$, then $\neg R(x,y')$ for all $y'\in O(y)$. Otherwise, pick a counterexample $x,y,y'$ with minimal $n\geq 1$. 

Suppose first that $n$ is odd and set $k := \frac{n-1}{2}$. Let $a$ be the $a_{k+1}$ from the chain of diamonds witnessing $x\sim_n y$. 
Consider the following pp-definition over $(\relstr{G},\group{G})$:
$$
S(w):\leftrightarrow \exists u,v\;(u\in O(a)\wedge u\sim_{k} v\wedge R(v,w))\; ;
$$
in case that $k=0$ we replace $\sim_k$ by the equality relation. Then clearly $S(b_n)$ and $S(c_n)$. But we also have $S(y)$, since $S(y')$ holds by virtue of $a\sim_k x$ and $R(x,y')$ and since $y$ is in the same orbit as $y'$. 
Hence, since $d_n=y$, we have $S(d_n)$ and so $S$ contains a $\relstr{K}_3$. 
By the minimality of $(\relstr{G},\group{G})$ (see Step~0), it contains $x$, for otherwise we could intersect $R$ with $S^2$ and obtain a relation whose support consists of a smaller number of orbits. Let $u, v\in G$ as in the definition of $S$ witness that $S(x)$ holds. Then $u\sim _{k} v$, but also $u\sim_{k} x'$ for some $x'\in O(x)$, as $a \sim_k x$, $u \in O(a)$, and $\sim_k$ is invariant under $\group{G}$. Therefore, $v\sim_{n-1} x'$, which together with  $R(v,x)$ contradicts the minimality of $n$ when $n\geq 3$; when $n=1$, this means that we have discovered a pseudoloop of $(\relstr{G},\group{G})$, again a contradiction.

Suppose now that $n$ is even and denote $k = \frac{n}{2}-1$; the argument is similar. 
Let $b,c$ be the $b_{k+1}, c_{k+1}$ from the chain of diamonds witnessing $x\sim_n y$. 
Consider the following pp-definition:
\begin{align*}
S(w):\leftrightarrow \exists & u_b,u_c,u,v\;((u_b,u_c)\in O(b,c)\wedge \\
  & R(u,u_c)\wedge R(u,u_b)\wedge u\sim_{k} v\wedge R(v,w))\; .
\end{align*}
Then as in the odd case, $S(b_n), S(c_n), S(d_n)$, and so the set defined by $S$ contains a $\relstr{K}_3$. 
By the minimality of $(\relstr{G},\group{G})$, it contains $x$; let $u_b,u_c,u,v\in G$ as in the definition of $S$ witness this. Then $u\sim _{k} v$, but also 
 $u\sim _{k+1} x'$ for some $x'\in O(y)$. Hence, $v\sim_{n-1} x'$ and $R(v,x)$ contradict the minimality of $n$.\bigskip
 
 \noindent \textit{Step 3}: Defining 
 $$
 R'(x,y):\leftrightarrow \exists x', y'\; (x\sim x'\wedge y\sim y'\wedge R(x',y'))
 $$
 we obtain a relation $R'\supseteq R$ which does not contain a pseudoloop. Indeed,
if $R'(x,y)$ is witnessed by $x',y'$ and $x$ and $y$ are in the same orbit, then 
$x \sim y''$ for some $y'' \in O(y')$ since $y\in O(x)$ and since $\sim$ is invariant under $\group{G}$. Thus $x' \sim y''$ and $R(x',y')$, a contradiction with Step~2.
Moreover, every edge in $R'$ is contained in a $\relstr{K}_3$: if $z'$ is so that $\{x',y',z'\}$ induce a $\relstr{K}_3$ in $R$, 
then $\{x,y,z'\}$ induce a $\relstr{K}_3$ in $R'$, for $z'\sim z'$ and $x\sim x'$ imply $R'(x,z')$, and similarly we infer $R'(y,z')$. 

Therefore, we may replace $R$ by $R'$. If this replacement changes the equivalence $\sim$, we repeat Step 3. 
Since the relation $R$ gets bigger after each step and $\group{G}$ is oligomorphic, this process stabilizes after finitely many steps.
\bigskip

\noindent \textit{Step 4}:
Now $R$ is in fact a relation between equivalence classes of $\sim$ and the naturally defined quotient gg-system $(\relstr{G}^q,\group{G}^q)$ on $G^q = G/\sim$ contains neither pseudoloops (by Step~3), nor \emph{diamonds}, that is, there do not exist distinct $a,b,c,d \in G^q$ such that $\{a,b,c\}$ and $\{b,c,d\}$ both induce a $\relstr{K}_3$. 
Moreover, every edge of $\relstr{G}^q$ is still contained in a $\relstr{K}_3$. 
We replace the gg-system $(\relstr{G},\group{G})$ by $(\relstr{G}^q,\group{G}^q)$.

If this new gg-system was not minimal, we could repeat the whole proof starting from Step 0. In each reiteration, Step 0 decreases the number of orbits in the support, while no other step increases it, so this process will terminate after a finite number of steps, and we will end up with a minimal gg-system after Step~4 eventually. The first author insists to remark that this process is in fact unnecessary, as the gg-system is already minimal after the first round of Steps~0 to~4: if after factoring it did pp-define a relation whose support consists of less $\group{G}^q$-orbits,
then the syntactically same pp-definition would show that the original gg-system was not minimal.

\bigskip

\noindent
Whichever solution we prefer, summarizing we end up with a gg-system $(\relstr{G},\group{G})$ which is minimal, pseudoloop-free, and diamond-free; still, every edge of $\relstr{G}$ is contained in a $\relstr{K}_3$. 
\bigskip

\noindent \textit{Step 5}: For $k\geq 1$, we denote the $k$-th power of $\relstr{K}_3$ by $\relstr{T}_k$.
By Lemma~\ref{lem:Tk_bound} shown below, there exists an $m\geq 1$ such that, for any $k \geq m$, $\relstr{G}$ has no induced subgraph isomorphic to $\relstr{T}_k$.\bigskip

\noindent \textit{Step 6}: Recall that $\relstr{G}$ contains $\relstr{T}_1 = \relstr{K}_3$. By Step 5, there exists a maximal $k\geq 1$ such that $\relstr{G}$ contains an induced subgraph isomorphic to $\relstr{T}_k$.
Let $k$ be that number and let $a_1, \dots, a_l$, where $l = |T_k| = 3^{k}$, denote the vertices of such an induced subgraph. 
We show that $\relstr{G}$ pp-defines the set $A = \{a_1, \dots, a_l\}$ with parameters $a_1,\ldots,a_l$. 

By~\cite{BodirskyNesetrilJLC}, this is the case if each $l$-ary operation $f$ in $\Pol(\relstr{G},a_1, \dots, a_l)$ preserves $A$. So, suppose that such a function $f$ does not preserve $A$.
Now, $f$ is a homomorphism $\relstr{G}^l \To \relstr{G}$ and its restriction to $A$ is a homomorphism $f'$ from $\relstr{T}_k$ to the diamond-free graph $\relstr{G}$ whose image, which contains $A$ because $f$ stabilizes each $a_i$, is strictly larger than $|T_k|$. \cite[Claim 3, Subsection 3.2]{B05} shows that the image of $f'$ induces a graph isomorphic to $\relstr{T}_m$ for some $m > k$, a contradiction.\bigskip

\noindent \textit{Step 7}: Step 6 implies that $\relstr{G}$ pp-interprets $\relstr{T}_k$ with parameters. But $\relstr{K}_3$ can be pp-interpreted in $\relstr{T}_k$ with parameters by the final sentence of~\cite{B05}.
\end{proof}

\begin{lemma}\label{lem:Tk_bound}
Let $(\relstr{G},\group{G})$ be an oligomorphic gg-system containing a $\relstr{K}_3$ and having no pseudoloops, and assume the system is minimal. 
Then $\relstr{G}$ has no induced subgraph isomorphic to $\relstr{T}_k$ for any $k\geq 1$ where $|T_k|=3^{k}$  exceeds the number of orbits of $\group{G}$.\bigskip
\end{lemma}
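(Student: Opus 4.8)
The plan is to argue by contradiction and, in fact, to prove the following slightly stronger statement by induction on $k$: in an oligomorphic, pseudoloop-free, minimal gg-system $(\relstr{G},\group{G})$ containing a $\relstr{K}_3$, any two distinct vertices of any induced subgraph of $\relstr{G}$ isomorphic to $\relstr{T}_k$ lie in different $\group{G}$-orbits. This immediately yields the lemma, since then an induced copy of $\relstr{T}_k$ has its $3^k$ vertices distributed over $3^k$ distinct orbits, so $3^k$ cannot exceed the number of orbits of $\group{G}$. The one elementary fact used throughout is that, because $\group{G}\leq\Aut(\relstr{G})$ and $(\relstr{G},\group{G})$ has no pseudoloops, no edge of $\relstr{G}$ joins two vertices of the same $\group{G}$-orbit. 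Hence if two distinct vertices $a=a_u$, $b=a_v$ of an induced $\relstr{T}_k$ (with vertex set $\{a_w:w\in\{1,2,3\}^k\}$) were in the same orbit, they would be non-adjacent, i.e.\ their indices $u,v$ would agree on a non-empty set $C$ of coordinates and differ on a non-empty set $D$. The base case $k=1$ is then immediate, as in $\relstr{T}_1=\relstr{K}_3$ all pairs of distinct vertices are adjacent.

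For the inductive step ($k\geq 2$) the main tool is a folding operation: for any two distinct coordinates $i\neq j$ and any permutation $\pi$ of $\{1,2,3\}$, the set $\{a_w:w_i=\pi(w_j)\}$ of $3^{k-1}$ vertices is again an induced subgraph of $\relstr{G}$, isomorphic to $\relstr{T}_{k-1}$, because two such $a_w,a_{w'}$ are adjacent exactly when $w$ and $w'$ differ in all coordinates other than $i$ (adjacency in coordinate $i$ being forced by adjacency in coordinate $j$). Given a same-orbit pair $a_u,a_v$ with agreement set $C$ and difference set $D$: if $|C|\geq 2$, choose $i\neq j$ in $C$ and any permutation $\pi$ with $\pi(u_j)=u_i$ (then also $\pi(v_j)=v_i$, since $u_i=v_i$ and $u_j=v_j$); if $|D|\geq 2$, choose $i\neq j$ in $D$ and the permutation $\pi$ with $\pi(u_j)=u_i$ and $\pi(v_j)=v_i$, which exists because $u_i\neq v_i$ and $u_j\neq v_j$. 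In either case the folded induced copy of $\relstr{T}_{k-1}$ still contains both $a_u$ and $a_v$, contradicting the induction hypothesis. Since $|C|+|D|=k$, this disposes of every case with $k\geq 3$, and also of every case for $k=2$ except $|C|=|D|=1$.

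It therefore remains to treat an induced $\relstr{T}_2$ with a pair $a_u,a_v$ in the same orbit agreeing on exactly one coordinate. Here folding to $\relstr{T}_1=\relstr{K}_3$ is impossible, since no induced triangle of $\relstr{T}_2$ contains two vertices sharing a coordinate. In this situation $a_u$, $a_v$ and their two common neighbours in the $\relstr{T}_2$ span an induced $4$-cycle with $a_u,a_v$ opposite; my intention is to combine this $4$-cycle with the ambient $\relstr{K}_3$ and the orbit relations of $\group{G}$ in order to pp-define from $\relstr{G}$ a symmetric relation $R'$ for which $(R',\group{G})$ is a pseudoloop-free gg-system still containing a $\relstr{K}_3$ but whose support is a union of strictly fewer orbits, contradicting minimality — in the same spirit as the minimality reductions carried out repeatedly in the proof of Lemma~\ref{lem:loop}.

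I expect this last step, the base case $k=2$, to be the main obstacle. For $k\geq 3$ the folding argument is purely combinatorial and reduces everything to smaller $k$, so minimality is genuinely needed only at $k=2$. The delicate point there is that one must obtain an honest \emph{smaller gg-system}: the naive candidates for $R'$ that one builds from the orbit relations (restricting edges to an orbit of pairs, or composing such orbits with $R$) all relate only vertices within a single orbit, and hence consist entirely of pseudoloops, so they fail the pseudoloop-freeness requirement. Producing a pp-definable $R'\subseteq R$ — cutting $\relstr{G}$ down to a pp-definable union of orbits that still carries a $\relstr{K}_3$ yet omits at least one orbit of the support — while using only the data available from the induced $\relstr{T}_2$ containing a repeated orbit, is where the real work lies.
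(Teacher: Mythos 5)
Your folding argument for $k\geq 3$ is correct and appealing: the restriction of an induced copy of $\relstr{T}_k$ to $\{a_w : w_i = \pi(w_j)\}$ is indeed an induced $\relstr{T}_{k-1}$ in $\relstr{G}$, and when either the agreement set $C$ or the difference set $D$ has size at least $2$ you can always pick $i,j,\pi$ so that both same-orbit vertices survive the fold, contradicting the inductive hypothesis. The stronger statement you aim for (all vertices of any induced $\relstr{T}_k$ lie in distinct $\group{G}$-orbits) is also exactly what the paper's proof implicitly establishes, since its cardinality assumption is used only to extract a same-orbit pair; so your target is legitimate. This reduction-by-folding is genuinely different from the paper's approach and, if completed, would give a cleaner argument for the "large $k$" part.

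However, the case $k=2$ with $|C|=|D|=1$, which you acknowledge as unfinished, is not a residual technicality: it is the heart of the lemma, the one place where minimality must really be exploited, and your intended strategy for it points in the wrong direction. The paper does no induction on $k$ at all. It works directly with the same-orbit pair $\tuple{a},\tuple{a}'$ and the triangle $\{\tuple{a},\tuple{b},\tuple{c}\}$ inside the $\relstr{T}_k$ copy, pp-defines a relation $S$ (``$u$ and $v$ have common neighbours adjacent into $A$, $B$, and $C$''), then $Q$ from $S$ and $R$, and finally takes $R'$ to be the triangle-supported part of $Q$. Minimality is used to show $S$ is reflexive on the support and that a common neighbour's neighbourhood cannot give a smaller system, and the $\relstr{T}_k$ structure is used to show $R'(\tuple{a}',\tuple{b})$ and $R'(\tuple{a}',\tuple{c})$, at least one of which is a new edge. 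The upshot is a pp-definable, pseudoloop-free, symmetric $R'\supsetneq R$ with the \emph{same} support; the contradiction then comes from oligomorphicity, since a $\group{G}$-invariant binary relation cannot be strictly enlarged infinitely often. Your sketch instead seeks a pp-definable $R'\subseteq R$ whose support loses an orbit, and you yourself observe that the naive candidates collapse to pseudoloops; that obstruction is real, and I do not see how to push through in that direction using only the data of an induced $\relstr{T}_2$ with a repeated orbit. The missing idea — the $S,Q,R'$ construction that makes the \emph{relation} grow rather than the \emph{support} shrink — is the essential content of the paper's proof, and without it your argument is incomplete.
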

\begin{proof}
Suppose there exists a counterexample $(\relstr{G},\group{G})$, where $\relstr{G}=(G;R)$. 
We may assume without loss of generality that every edge of $\relstr{G}$ is contained in a $\relstr{K}_3$ by keeping only those which are, as in Step~1 of the proof of Lemma~\ref{lem:loop}.
Under those assumptions, we pp-define in $(\relstr{G},\group{G})$ a symmetric relation $R'\supsetneq R$ without pseudoloops in which every edge is still contained in a $\relstr{K}_3$, and such that the support of $(R',\group{G})$ equals the support of $(R,\group{G})$. Repeating this process, by oligomorphicity we must after finitely many steps arrive at a gg-system $(R',\group{G})$ which is not minimal. Hence, $(\relstr{G},\group{G})$ was not minimal in the first place, a contradiction. 

Fix a copy of $\relstr{T}_k$ in $\relstr{G}$, the elements of which we denote by tuples in $\{1,2,3\}^k$. So, two vertices in $\{1,2,3\}^k$ are adjacent if and only if they differ in every coordinate. 
From the cardinality assumption, we can pick two elements $\tuple{a},\tuple{a}'$ of the copy that belong to the same orbit $A$. Let $\tuple{b},\tuple{c}$ in the copy be so that $\{\tuple{a},\tuple{b},\tuple{c}\}$ induce a $\relstr{K}_3$, and let $B,C$ be their orbits. Since $(\relstr{G},\group{G})$ has no pseudoloops, the three orbits $A,B,C$ are distinct. Without loss of generality, assume $\tuple{a}=1^k$ (i.e., the $k$ tuple all of whose entries equal $1$), $\tuple{b}=2^k$, and $\tuple{c}=3^k$.

 Define a relation
\begin{align*}
	S(u,v):\leftrightarrow \exists & a'',b'',c'', n_A,n_B,n_C \; \\
	&(R(u,n_A)\wedge R(v,n_A)\wedge R(n_A,a'')\wedge a''\in A\; \wedge\\
	& R(u,n_B)\wedge R(v,n_B)\wedge R(n_B,b'')\wedge b''\in B\; \wedge\\
	& R(u,n_C)\wedge R(v,n_C)\wedge R(n_C,c'')\wedge c''\in C)\; .
\end{align*}
In words, $u,v$ have common neighbors adjacent to elements in $A, B,$ and $C$. 

The relation $S$ is obviously symmetric. It is also reflexive on the support of $(\relstr{G},\group{G})$: every element of the support is a neighbor of a neighbor of an element in $A$, and similarly in $B$ and $C$, by the minimality of $(\relstr{G},\group{G})$; otherwise, we could restrict $R$ to neighbors of neighbors of $A$, a set which contains $A\cup B\cup C$; we would therefore obtain a smaller support gg-system containing a $\relstr{K}_3$, namely the one induced by $\{\tuple{a},\tuple{b},\tuple{c}\}$. 

Observe that whenever $S(u,v)$ holds, then every element of the support of  $(\relstr{G},\group{G})$ is adjacent to a common neighbor of $O(u)$ and $O(v)$: this follows as above from the minimality of $(\relstr{G},\group{G})$ since the elements of  $A\cup B \cup C$ are adjacent to a common neighbor of $O(u)$ and $O(v)$.

Set
$$
Q(u,v):\leftrightarrow \exists s\; (R(u,s)\wedge S(s,v))\; \wedge\; \exists t\; (S(u,t)\wedge R(t,v))\; .
$$
Then $Q\supseteq R$: since $S$ is reflexive on the support of $(\relstr{G},\group{G})$, setting $s=v$ and $t=u$ in the above definition shows that $R(u,v)$ implies $Q(u,v)$. Moreover, $Q$ is symmetric by definition. Let $R'$ consist of those edges of $Q$ which are contained in a $\relstr{K}_3$ with respect to $Q$. We still have that $R'\supseteq R$.

We now show that $(Q,\group{G})$, and thus $(R',\group{G})$, has no pseudoloop. To this end, it suffices to show that whenever $R(u,v)$ holds, then we cannot have $S(u,v')$ for any $v'\in O(v)$. Suppose to the contrary that there exist such elements. 
The $R$-edge $(u,v)$ is contained in a $\relstr{K}_3$, induced by $\{u,v,w\}$, for some $w\in G$.
As observed above, each vertex, in particular the vertex $w$, is adjacent to a common neighbor of $O(u)$ and $O(v')=O(v)$. 
Therefore, there exists a common neighbor $z$ of $O(u)$, $O(v)$ and $O(w)$. 
The set of neighbors of $O(z)$ contains $O(u), O(v)$, and $O(w)$; it is a proper subset of $G$ since $(\relstr{G},\group{G})$ has no pseudoloops; it is pp-definable in $(\relstr{G},\group{G})$; and finally, it contains a $\relstr{K}_3$, contradicting the minimality of $(\relstr{G},\group{G})$.

Using for the first time the copy of $\relstr{T}_k$ in $\relstr{G}$, we now show that $R$ is properly contained in $R'$ by showing that $\tuple{a}'$, the second element of the copy of $\relstr{T}_k$ in the orbit $A$ of $\tuple{a}$, is related to $\tuple{b}$ and $\tuple{c}$ via $R'$. Note that this is sufficient since in $\relstr{T}_k$, no two distinct elements are related to both $\tuple{b}$ and $\tuple{c}$. 
We show only $R'(\tuple{a}',\tuple{b})$, the second claim is analogous. 
Reordering the tuples when necessary, we may assume that $a'_i\neq 2$ for all $1\leq i\leq j$, and $a'_i= 2$ for all $j<i\leq k$. Since $\tuple{a}'\neq \tuple{b}$, we have $j\geq 1$. 
Observe that whenever $\tuple{u}, \tuple{v} \in \{1,2,3\}^k$ are of the form $(x,\ldots,x,2,\ldots,2)$ and $(x,\ldots,x,3,\ldots,3)$, respectively, where the number of occurences of $x$ equals $j$, then $S(\tuple{u},\tuple{v})$: this is witnessed by their common neighbor $(y_1,\ldots,y_j,1,\ldots,1)$, where $y_i\notin\{a'_i,x\}$ for all $1\leq i\leq j$, which is $R$-related to $\tuple{a}'\in A$; their common neighbor $(z,\ldots,z,1,\ldots,1)$, starting with $j$ occurrences of $z\notin\{2,x\}$, which is $R$-related to $\tuple{b}\in B$; and their common neighbor $(w,\ldots,w,1,\ldots,1)$, starting with $j$ occurrences of $w\notin\{3,x\}$, which is $R$-related to $\tuple{c}\in C$. But now we see that $Q(a',b)$ holds: setting $\tuple{t} = (a_1',\ldots,a_j',3,\ldots,3)$, 
we have $S(\tuple{a}',\tuple{t})$ and $R(\tuple{t},\tuple{b})$; on the other hand, setting $\tuple{s}:=(2,\ldots,2,3,\ldots,3)$, with $j$ occurrences of $2$, we have $R(\tuple{a}',\tuple{s})$ and $S(\tuple{s},\tuple{b})$. We can then conclude that $R'(\tuple{a}',\tuple{b})$ holds, since any two elements of $\{1,2,3\}^k$, in particular $\tuple{a}'$ and $\tuple{b}$, have a common neighbor with respect to $R$, and hence also with respect to $Q$, showing that the $Q$-edge $(\tuple{a}',\tuple{b})$ is contained in a $\relstr{K}_3$ with respect to $Q$.

\end{proof}

\section{Proof of the main result} \label{sect:main}

In order to derive Theorem~\ref{thm:main}, we will produce pseudo-Siggers operations locally using the pseudoloop lemma, and then derive a global pseudo-Siggers operation via a compactness argument.

\begin{definition}
We say that a function clone $\clone{C}$ has \emph{local pseudo-Siggers operations} if for every finite $A\subseteq C$ there exists a $6$-ary $s\in\clone{C}$ and unary $\alpha, \beta \in\clone{C}$ satisfying 
\[
\alpha s(x,y,x,z,y,z) = \beta s(y,x,z,x,z,y)
\]
for all $x,y,z \in A$.
\end{definition}

\begin{lemma}\label{lem:localglobal}
Let $\clone{C}$ be a closed oligomorphic function clone. If it has local pseudo-Siggers operations, then it has a pseudo-Siggers operation.
\end{lemma}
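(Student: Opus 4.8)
The plan is a König's-lemma compactness argument: oligomorphicity will provide finite branching, and the hypothesis of local pseudo-Siggers operations will provide infinitely many levels. If $C$ is finite there is nothing to prove, so fix an enumeration $C=\{c_0,c_1,\dots\}$ and put $A_n=\{c_0,\dots,c_{n-1}\}$. Call a triple $(s,\alpha,\beta)\in\clone C^3$, with $s$ $6$-ary and $\alpha,\beta$ unary, \emph{$n$-good} if $\alpha s(x,y,x,z,y,z)=\beta s(y,x,z,x,z,y)$ for all $x,y,z\in A_n$; by hypothesis an $n$-good triple exists for every $n$. Attach to each triple its \emph{$n$-footprint}: the tuple over $C$ that lists, in a fixed order, $c_0,\dots,c_{n-1}$, the values $s(\bar u)$ for $\bar u\in A_n^6$, the values $\alpha(s(\bar u))$ and $\beta(s(\bar u))$ for $\bar u\in A_n^6$, and the values $\alpha(c_i),\beta(c_i)$ for $i<n$. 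This is a tuple of a fixed finite length; the equations defining $n$-goodness merely assert that certain of its coordinates coincide; and the $m$-footprint is a subtuple of the $n$-footprint whenever $m\le n$.

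I would then form the tree $T$ whose level-$n$ vertices are the orbits, under the coordinatewise action of $\clgr{\clone C}$, of the $n$-footprints of $n$-good triples, declaring the orbit of an $(n{+}1)$-footprint a child of the orbit of the corresponding $n$-footprint (with a formal root at level $0$). Since $\clone C$ is oligomorphic, $\clgr{\clone C}$ has only finitely many orbits on tuples of each fixed length, so $T$ is finitely branching; it has a vertex at every level because $n$-good triples exist; and $T$ is well-defined because $n$-goodness and the parent map depend on a footprint only through its orbit (a subtuple of an orbit-equivalent tuple is orbit-equivalent, and $\clone C^3$ is stable under the automorphism action). König's lemma then yields an infinite branch $(\nu_n)_{n\ge 1}$.

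The remaining, and to my mind only genuinely delicate, step is to convert this branch into actual operations of $\clone C$. The two facts I would use are: replacing an $n$-good triple $(s,\alpha,\beta)$ by $(g\circ s,\;g\alpha g^{-1},\;g\beta g^{-1})$ for $g\in\clgr{\clone C}$ keeps it $n$-good and shifts its footprint by the coordinatewise action of $g$; and an automorphism realizing an orbit equality of footprints necessarily fixes the anchoring coordinates $c_0,\dots,c_{n-1}$. Using these I would build, by induction on $n$, $n$-good triples $\tau_n=(s^{(n)},\alpha^{(n)},\beta^{(n)})$ whose $n$-footprint lies in $\nu_n$ and such that the $n$-footprint of $\tau_{n+1}$ literally equals that of $\tau_n$: from $\tau_n$, take any $(n{+}1)$-good witness for $\nu_{n+1}$ and correct it by the automorphism matching its $n$-footprint to $\mathrm{fp}_n(\tau_n)$; this correction does not move the witness out of $\nu_{n+1}$, so it is compatible with all later stages. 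Then $s^{(n)}\!\restriction\! A_m^6$, $\alpha^{(n)}\!\restriction\! A_m$, $\beta^{(n)}\!\restriction\! A_m$ stabilize as $n\to\infty$, the pointwise limits $s\colon C^6\to C$ and $\alpha,\beta\colon C\to C$ lie in $\clone C$ because $\clone C$ is closed and $s,\alpha,\beta$ are locally interpolated by the $\tau_n$, and for any $x,y,z\in C$ the identity $\alpha s(x,y,x,z,y,z)=\beta s(y,x,z,x,z,y)$ follows from $n$-goodness of $\tau_n$ for $n$ large enough to contain $x,y,z$ together with the two relevant $s$-values. Hence $\clone C$ has a pseudo-Siggers operation. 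The main obstacle is thus organizational rather than conceptual: one must choose the footprint large enough that its orbit is at once a finite invariant \emph{and} certifies $n$-goodness, and thread the representatives coherently — routine within the $\omega$-categorical toolbox, but easy to get subtly wrong.
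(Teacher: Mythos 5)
Your overall strategy --- using oligomorphicity to obtain finitely many orbits of value patterns and then running a compactness/K\"onig argument --- is the same as the paper's, and the tree construction itself is fine; the gap is in the realization step, i.e., in the attempt to stabilize $s$, $\alpha$ and $\beta$ in a single pass. The auxiliary fact you invoke, that replacing $(s,\alpha,\beta)$ by $(g\circ s,\,g\alpha g^{-1},\,g\beta g^{-1})$ shifts the footprint by the coordinatewise action of $g$, is false for the anchor coordinates $c_0,\dots,c_{n-1}$ (they stay put, whereas $g$ would send them to $g(c_i)$) and for the coordinates $\alpha(c_i),\beta(c_i)$ (they become $g\alpha(g^{-1}(c_i))$ rather than $g(\alpha(c_i))$). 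This is harmless for the $n$-footprint in your inductive step, because there the correcting automorphism $g$ fixes $c_0,\dots,c_{n-1}$. But the assertion ``this correction does not move the witness out of $\nu_{n+1}$'' fails: $g$ need not fix $c_n$, so the corrected triple has $g\alpha_\sigma(g^{-1}(c_n))$ where the footprint records $\alpha(c_n)$, and its $(n+1)$-footprint need not lie in $\nu_{n+1}$. For instance, whether the coordinate $\alpha(c_n)$ equals the coordinate $c_n$ is an orbit invariant of the footprint, and conjugation can destroy it (it holds for $\sigma$ iff $\alpha_\sigma$ fixes $c_n$, but for the corrected triple iff $\alpha_\sigma$ fixes $g^{-1}(c_n)$). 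Once $\mathrm{fp}_{n+1}(\tau_{n+1})\notin\nu_{n+1}$, there is no automorphism matching the next witness's $(n+1)$-footprint to that of $\tau_{n+1}$, and the induction stalls.

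The difficulty is intrinsic to doing everything at once: a modification of $s$ by outer composition must be compensated inside $\alpha,\beta$ by conjugation, and conjugation does not act coordinatewise on the values $\alpha(c_i)$. The paper sidesteps this with two successive compactness arguments. First it thins out the witnesses $s_i$ alone, replacing $s_i$ by $\gamma s_i$ with $\gamma\in\clgr{\clone{C}}$ --- which is still a local witness (with unaries $\alpha_i\gamma^{-1},\beta_i\gamma^{-1}$) and genuinely acts coordinatewise on the tuple of $s$-values --- to obtain a limit $s\in\clone{C}$. Then, with $s$ frozen, it runs a second compactness argument on the pairs $(\alpha_i,\beta_i)$ witnessing the identity for that fixed $s$, using the modification $(\gamma\alpha_i,\gamma\beta_i)$, which again acts coordinatewise on the tuple of their values. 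Restructuring your K\"onig argument into these two passes (first for $s$, then for $(\alpha,\beta)$) repairs the proof; as written, the single combined footprint does not thread coherently.
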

\begin{proof}
Let $A_0\subseteq A_1\subseteq\cdots$ be a sequence of finite subsets of $C$ whose union equals $C$, and pick for every $i\in \omega$ a $6$-ary operation $s_i\in\clone{C}$ witnessing the definition of local pseudo-Siggers operations on $A_i$, i.e., there exist unary $\alpha_i,\beta_i\in\clone{C}$ such that $\alpha_i s_i(x,y,x,z,y,z) = \beta_i s_i(y,x,z,x,z,y)$ for all $x,y,z\in A_i$. 
 Note that if  $s_i$ is such a witness for $A_i$, then so is $\gamma s_i$, for all $\gamma\in\clgr{\clone{C}}$. Hence, because $\clgr{\clone{C}}$ is oligomorphic,  we may thin out the sequence in such a way that $s_j$ agrees with $s_i$ on $A_i$, for all $j>i\geq 0$. We briefly describe this standard compactness argument for the convenience of the reader: there exists a smallest $j_0\geq 0$ such that for infinitely many $k\geq j_0$ there exists $\gamma_k\in\clgr{\clone{C}}$ such that $\gamma_k s_k$ agrees with $s_{j_0}$ on $A_0$, by oligomorphicity. Replace $s_0$ by $s_{j_0}$, all $s_k$ as above by $\gamma_k s_k$, and remove all other $s_{k'}$ where $k'\geq 0$ from the sequence. Next repeat this process picking $j_1\geq 1$ for $A_1$, and so on. This completes the argument.
 
Since the elements of the sequence $(s_i)_{i\in\omega}$ agree on every fixed $A_i$ eventually, and since $\clone{C}$ is closed, they converge to a function $s\in \clone{C}$. The function $s$, restricted to any $A_i$, witnesses local pseudo-Siggers operations on $A_i$, i.e., there exist unary $\alpha_i,\beta_i\in\clone{C}$ such that $\alpha_i s(x,y,x,z,y,z) = \beta_i s(y,x,z,x,z,y)$ for all $x,y,z\in A_i$. By a similar compactness argument as above, there exist
unary functions $\alpha,\beta\in\clone{C}$ such that $\alpha s(x,y,x,z,y,z) = \beta s(y,x,z,x,z,y)$ for all $x,y,z\in C$.
\end{proof}

We now consider gg-systems where the group $\clgr{\clone{C}}$ of a closed oligomorphic function clone $\clone{C}$ acts on finite powers of its domain.

\begin{lemma}\label{lem:1}
Let $\clone{C}$ be a closed oligomorphic function clone. Suppose that every gg-system $(\relstr{G},\group{G})$ where
\begin{itemize}
\item $\relstr{G}=(C^k;R)$ for some $k\geq 1$,
\item $\group{G}$ corresponds to the componentwise action of $\clgr{\clone{C}}$ on $C^k$,
\item $\relstr{G}$ contains $\relstr{K}_3$, and 
\item $R\subseteq C^{2k}$ is invariant under $\clone{C}$
\end{itemize}
has a pseudoloop. Then $\clone{C}$ has a pseudo-Siggers operation.
\end{lemma}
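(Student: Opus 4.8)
The plan is to invoke Lemma~\ref{lem:localglobal} and establish only that $\clone{C}$ has \emph{local} pseudo-Siggers operations. We may assume $|C|\geq 2$, since otherwise every $6$-ary projection works, and it suffices to produce, for each finite $D\subseteq C$ with $|D|\geq 2$, a $6$-ary $s\in\clone{C}$ and unary $\alpha,\beta\in\clone{C}$ with $\alpha s(x,y,x,z,y,z)=\beta s(y,x,z,x,z,y)$ for all $x,y,z\in D$, because a witness on a larger finite set restricts to one on any subset. This routing through the local notion is forced: $k$ in the hypothesis is a natural number, so we cannot index by all of $C^{3}$ when $C$ is infinite.

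Fix such a $D$, put $k:=|D|^{3}$, and identify $C^{k}$ with $C^{D^{3}}$ via some bijection of the index sets, so that its elements are functions $D^{3}\to C$; let $\mathbf{p}_{1},\mathbf{p}_{2},\mathbf{p}_{3}\in C^{D^{3}}$ be the three coordinate projections $(x,y,z)\mapsto x$, $(x,y,z)\mapsto y$, $(x,y,z)\mapsto z$. Let $R$ be the smallest relation on $C^{D^{3}}$ invariant under $\clone{C}$ that contains the six ordered pairs $(\mathbf{p}_{a},\mathbf{p}_{b})$ with $a\neq b$; concretely, $R$ consists of all pairs $\bigl(f(\mathbf{p}_{\ell_{1}},\dots,\mathbf{p}_{\ell_{m}}),\,f(\mathbf{p}_{r_{1}},\dots,\mathbf{p}_{r_{m}})\bigr)$ where $f\in\clone{C}$ is $m$-ary and each $(\ell_{i},r_{i})$ is an ordered pair of distinct elements of $\{1,2,3\}$. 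Since the generating set is symmetric, $R$ is symmetric; since $|D|\geq 2$ the three vertices $\mathbf{p}_{1},\mathbf{p}_{2},\mathbf{p}_{3}$ are distinct and pairwise $R$-adjacent, so $(C^{D^{3}};R)$ contains $\relstr{K}_{3}$; and $R$ is invariant under $\clone{C}$ by construction. Hence, letting $\group{G}$ be the componentwise action of $\clgr{\clone{C}}$ on $C^{D^{3}}$, the pair $\bigl((C^{D^{3}};R),\group{G}\bigr)$ is a gg-system of exactly the form appearing in the hypothesis, so by assumption it has a pseudoloop $(\mathbf{u},\gamma\cdot\mathbf{u})$ with $\mathbf{u}\in C^{D^{3}}$ and $\gamma\in\clgr{\clone{C}}$.

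Next I would unfold this pseudoloop: from $(\mathbf{u},\gamma\cdot\mathbf{u})\in R$ we obtain an $m$-ary $f\in\clone{C}$ and ordered pairs $(\ell_{i},r_{i})$ of distinct elements of $\{1,2,3\}$ with $\mathbf{u}=f(\mathbf{p}_{\ell_{1}},\dots,\mathbf{p}_{\ell_{m}})$ and $\gamma\cdot\mathbf{u}=f(\mathbf{p}_{r_{1}},\dots,\mathbf{p}_{r_{m}})$. Fix the bijection $\phi$ from $\{1,\dots,6\}$ to the six ordered pairs of distinct elements of $\{1,2,3\}$ that matches the $j$-th argument of the Siggers term with its $j$-th column read as such a pair, i.e.\ $\phi(1)=(1,2)$, $\phi(2)=(2,1)$, $\phi(3)=(1,3)$, $\phi(4)=(3,1)$, $\phi(5)=(2,3)$, $\phi(6)=(3,2)$, corresponding to the columns $\binom{x}{y},\binom{y}{x},\binom{x}{z},\binom{z}{x},\binom{y}{z},\binom{z}{y}$ of $s(x,y,x,z,y,z)$ versus $s(y,x,z,x,z,y)$, and set $t(i):=\phi^{-1}(\ell_{i},r_{i})$. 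Define the $6$-ary operation $s(w_{1},\dots,w_{6}):=f(w_{t(1)},\dots,w_{t(m)})$, which lies in $\clone{C}$ as a composition of $f$ with projections. A direct evaluation at each $(x,y,z)\in D^{3}$, using that $\mathbf{p}_{a}$ returns the $a$-th of $x,y,z$, yields $\mathbf{u}[(x,y,z)]=s(x,y,x,z,y,z)$ and $(\gamma\cdot\mathbf{u})[(x,y,z)]=s(y,x,z,x,z,y)$; since $(\gamma\cdot\mathbf{u})[(x,y,z)]=\gamma\bigl(\mathbf{u}[(x,y,z)]\bigr)$, this gives $\gamma\, s(x,y,x,z,y,z)=s(y,x,z,x,z,y)$ on $D$. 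Thus $\alpha:=\gamma$ and $\beta:=\mathrm{id}_{C}$ witness a local pseudo-Siggers operation on $D$, and Lemma~\ref{lem:localglobal} completes the argument.

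The proof is largely mechanical once the gg-system above is chosen, and I expect the only delicate point to be the combinatorial dictionary between the six oriented edges of the triangle $\{\mathbf{p}_{1},\mathbf{p}_{2},\mathbf{p}_{3}\}$ — which label the columns appearing when the pseudoloop is rewritten as a term applied to $\mathbf{p}_{1},\mathbf{p}_{2},\mathbf{p}_{3}$ — and the six columns of the Siggers identity, together with verifying that collapsing the arity from $m$ down to $6$ by identifying arguments of equal type stays inside $\clone{C}$ and preserves the two evaluation identities. A secondary subtlety is the reduction to the local statement described above. It is also worth noting that the witness produced has $\beta=\mathrm{id}_{C}$, so a single unary operation already suffices.
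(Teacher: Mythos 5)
Your proposal is correct and follows essentially the same route as the paper's proof: reduce to local pseudo-Siggers operations via Lemma~\ref{lem:localglobal}, build the relation $R$ on $C^k$ ($k=|D|^3$) generated under $\clone{C}$ by the six ordered pairs of distinct ``coordinate projections'' enumerating $D^3$, and read a pseudoloop of the resulting gg-system as the pseudo-Siggers identity on $D$. The only cosmetic difference is that the paper defines $R$ directly as the image of $6$-ary operations (so the pseudoloop yields $s$ immediately, with $\beta=\mathrm{id}$ absorbed into the statement), whereas you define $R$ as the generated invariant set and then collapse the $m$-ary witness to a $6$-ary $s$ by identifying arguments --- an equivalence the paper also notes and uses.
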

\begin{proof}
We show that $\clone{C}$ has local pseudo-Siggers operations and apply Lemma~\ref{lem:localglobal}. Let $A\subseteq C$ be finite, and pick $k\geq 1$ and $\tuple{a}^x,\tuple{a}^y,\tuple{a}^z\in A^k$ such that the rows of the $(k\times 3)$-matrix $(\tuple{a}^x,\tuple{a}^y,\tuple{a}^z)$ form an enumeration of $A^3$. Let $R$ be the binary relation on $C^k$ where tuples $\tuple{b},\tuple{c}\in C^k$ are related via $R$ if there exists a $6$-ary $s\in \clone{C}$ such that $\tuple{b}=s(\tuple{a}^x,\tuple{a}^y,\tuple{a}^x,\tuple{a}^z,\tuple{a}^y,\tuple{a}^z)$ and $\tuple{c}=s(\tuple{a}^y,\tuple{a}^x,\tuple{a}^z,\tuple{a}^x,\tuple{a}^z,\tuple{a}^y)$. 
In other words, it is the $\clone{C}$-invariant  subset of $(2k)$-tuples generated by the six vectors obtained by concatenating $\tuple{a}^u$ and $\tuple{a}^v$, where $u,v\in\{x,y,z\}$ are distinct. The latter description reveals that $R$ is a symmetric relation on $C^k$ invariant under $\clone{C}$ and containing $\relstr{K}_3$, therefore the gg-system $(R,\group{G})$, where $\group{G}$ is the componentwise action of $\clgr{\clone{C}}$ on $C^k$,  has a pseudoloop $(\tuple{b},\tuple{c})$. That means that there exists a $6$-ary $s\in \clone{C}$ and $\alpha \in\clgr{\clone{C}}$ such that $s(\tuple{a}^x,\tuple{a}^y,\tuple{a}^x,\tuple{a}^z,\tuple{a}^y,\tuple{a}^z)=\alpha s(\tuple{a}^y,\tuple{a}^x,\tuple{a}^z,\tuple{a}^x,\tuple{a}^z,\tuple{a}^y)$, proving the claim.
\end{proof}

\begin{corollary}\label{cor:K3-Siggers}
Let $\relstr{A}$ be an $\omega$-categorical core. Then either it pp-interprets $\relstr{K}_3$ with parameters, or $\Pol(\relstr{A})$ has a pseudo-Siggers operation.
\end{corollary}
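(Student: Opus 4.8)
Set $\clone{C}:=\Pol(\relstr{A})$; since $\relstr{A}$ is $\omega$-categorical, $\clone{C}$ is a closed oligomorphic function clone, and $\clgr{\clone{C}}=\Aut(\relstr{A})$. I would prove the dichotomy by assuming that $\relstr{A}$ does \emph{not} pp-interpret $\relstr{K}_3$ with parameters and deducing that $\Pol(\relstr{A})$ has a pseudo-Siggers operation. By Lemma~\ref{lem:1} it then suffices to show, under this assumption, that every gg-system $(\relstr{G},\group{G})$ with $\relstr{G}=(A^k;R)$, $R\subseteq A^{2k}$ invariant under $\clone{C}$, $\group{G}$ the componentwise action of $\clgr{\clone{C}}$ on $A^k$, and $\relstr{G}$ containing $\relstr{K}_3$, has a pseudoloop.

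So fix such a gg-system. It is oligomorphic: an orbit of $\group{G}$ on an $m$-tuple over $A^k$ is exactly an orbit of $\Aut(\relstr{A})$ on the corresponding $mk$-tuple over $A$, and there are only finitely many of the latter for each $m$ since $\Aut(\relstr{A})$ is oligomorphic. Thus $(\relstr{G},\group{G})$ is an oligomorphic gg-system containing $\relstr{K}_3$, so the pseudoloop lemma (Lemma~\ref{lem:loop}) applies: either $(\relstr{G},\group{G})$ has a pseudoloop, and we are done, or it pp-interprets $\relstr{K}_3$ with parameters.

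It remains to exclude the second alternative. By Definition~\ref{defn:gg} the gg-system pp-interprets $\relstr{K}_3$ using only the relation $R$ together with orbits of $\group{G}$ on finite tuples. Viewed as a $2k$-ary relation on $A$, $R$ is invariant under $\Pol(\relstr{A})$ and hence pp-definable in $\relstr{A}$ without parameters, by the $\omega$-categorical Galois correspondence between polymorphism-invariant relations and pp-definable relations (cf.~\cite{BodirskyNesetrilJLC}). Each orbit of $\group{G}$ on an $m$-tuple over $A^k$ is an $\Aut(\relstr{A})$-orbit on an $mk$-tuple over $A$, hence pp-definable in $\relstr{A}$ \emph{with} parameters, precisely because $\relstr{A}$ is an $\omega$-categorical core. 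Therefore $\relstr{A}$ pp-interprets, with parameters and with identity coordinate map on $A^k$, the structure on $A^k$ carrying $R$ and all orbits of $\group{G}$ on finite tuples; composing this interpretation with the gg-system's own pp-interpretation of $\relstr{K}_3$, and using that pp-interpretations with parameters compose, we obtain that $\relstr{A}$ pp-interprets $\relstr{K}_3$ with parameters, contrary to assumption.

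The step I expect to be the main obstacle is the careful use, in the third paragraph, of the two Galois-type facts it rests on: that $\Pol(\relstr{A})$-invariance of $R$ yields pp-definability in the $\omega$-categorical structure $\relstr{A}$, and, more delicately, that $\Aut(\relstr{A})$-orbits of tuples are pp-definable with parameters --- which is exactly the place where the hypothesis that $\relstr{A}$ is a core is needed. The remaining ingredients --- oligomorphicity of the componentwise action, the bookkeeping of composing pp-interpretations with parameters, and identifying $A^k$ with a trivially pp-definable subset of a finite power of $A$ --- are routine.
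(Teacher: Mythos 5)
Your proof is correct and follows essentially the same route as the paper: apply Lemma~\ref{lem:1}, use the pseudoloop lemma on any gg-system violating its hypothesis, and rule out the pp-interpretation alternative via the Galois correspondence for $\Pol(\relstr{A})$-invariant relations together with the pp-definability (with parameters) of orbits in an $\omega$-categorical core. The only difference is presentational --- you argue contrapositively and spell out the oligomorphicity of the componentwise action and the composition of interpretations, which the paper leaves implicit.
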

\begin{proof}
We apply Lemma~\ref{lem:1} to the clone $\clone{C}:=\Pol(\relstr{A})$; then $\clgr{\clone{C}}$ consists precisely of the automorphisms of $\relstr{A}$. 
If the assumptions of this lemma are satisfied, then $\clone{C}$ has a pseudo-Siggers operation.
Otherwise, there exists a pseudoloop-free gg-system $((C^k;R),\group{G})$ satisfying the four conditions. By Lemma~\ref{lem:loop}, this gg-system pp-interprets $\relstr{K}_3$ with parameters.  Since $R$ is invariant under $\clone{C}$, it is pp-definable from $\relstr{A}$ by~\cite{BodirskyNesetrilJLC}. Moreover, since $\relstr{A}$ is a core, the orbits of $\group{G}$ are pp-definable from $\relstr{A}$ as well by~\cite{Bodirsky-HDR}. It follows that $\relstr{A}$ pp-interprets $\relstr{K}_3$ with parameters, as required.
\end{proof}

We are now ready to prove Theorem~\ref{thm:main}.

\begin{proof}[Proof of Theorem~\ref{thm:main}]
We first prove that (iii) implies (ii).
Take $\alpha,\beta,s \in \Pol(\relstr{A})$ satisfying the pseudo-Siggers identity. We claim that every stabilizer $\Pol(\relstr{A},c_1,\dots, c_n)$ has a pseudo-Siggers operation. To see that, consider the endomorphisms $\gamma,\delta$ of $\relstr{A}$ defined by $\gamma(x)=s(x,\dots,x)$, $\delta(x) = \alpha\gamma(x)$ ($=\beta\gamma(x)$ by the pseudo-Siggers identity). Because $\relstr{A}$ is a core, its automorphisms are dense in endomorphisms, thus there exist automorphisms $\epsilon, \theta$ of $\relstr{A}$ such that $\epsilon(c_i) = \gamma(c_i)$ and $\theta(c_i) = \delta(c_i)$ for every $i$. But then $\theta^{-1}\alpha\epsilon$, $\theta^{-1}\beta\epsilon$ and $\epsilon^{-1}s$ are contained in $\Pol(\relstr{A},c_1, \dots, c_n)$ and satisfy 
\[
(\theta^{-1}\alpha\epsilon)(\epsilon^{-1}s)(x,y,x,z,y,z) \approx (\theta^{-1}\beta\epsilon)(\epsilon^{-1}s)(y,x,z,x,z,y). 
\]


The implication from (ii) to (i) is trivial. 

Finally, assume that no stabilizer of $\Pol(\relstr{A})$ has a continuous homomorphism to $\trivclone$. Then no such stabilizer has a continuous clone homomorphism to $\Pol(\relstr{K}_3)$ since it is well-known that the latter clone has a continuous homomorphism to $\trivclone$.
By Theorem~\ref{thm:int}, $\relstr{A}$ does not pp-interpret $\relstr{K}_3$ with parameters. Corollary~\ref{cor:K3-Siggers} then tells us that $\Pol(\relstr{A})$ has a pseudo-Siggers operation.
\end{proof}





\section{Discussion}\label{sect:disc}

Our main theorem can be used as a tool for proving hardness: If an $\omega$-categorical core structure $\relstr{A}$ does not have a pseudo-Siggers polymorphism, then $\relstr{A}$ interprets all finite structures with parameters by the combination of Theorem~\ref{thm:main} and Theorem~\ref{thm:int}, and therefore $\CSP(\relstr{A})$ is $\NP$-hard. The pseudoloop lemma itself can be regarded as a hardness result as well:

\begin{corollary}
Let $\relstr{A}$ be an $\omega$-categorical core structure.
If $\relstr{A}$ pp-defines with parameters a pseudoloop-free graph containing a $\relstr{K}_3$, then $\CSP(\relstr{A})$ is $\NP$-hard.
\end{corollary}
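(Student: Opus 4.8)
The plan is to feed the given graph into the pseudoloop lemma and then transport the resulting interpretation of $\relstr{K}_3$ back to $\relstr{A}$. Let $c_1,\dots,c_n\in A$ be the parameters witnessing the pp-definition of the pseudoloop-free graph $\relstr{G}=(G;E)$ containing $\relstr{K}_3$, where, without loss of generality, $G$ is a pp-definable (with these parameters) subset of $A$, possibly $G=A$, and $E$ is a symmetric binary relation on $G$ pp-definable with these parameters. Set $\group{G}:=\Aut(\relstr{A},c_1,\dots,c_n)$, regarded as a permutation group on $G$; since $G$ and $E$ are invariant under $\group{G}$, the pair $(\relstr{G},\group{G})$ is a gg-system. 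Because $\relstr{A}$ is $\omega$-categorical, naming finitely many elements preserves $\omega$-categoricity, hence $\group{G}$ is oligomorphic and $(\relstr{G},\group{G})$ is an oligomorphic gg-system. By hypothesis it contains $\relstr{K}_3$ and has no pseudoloop, a pseudoloop here being precisely an $E$-edge of the form $(a,\gamma(a))$ with $\gamma\in\group{G}$.

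I would then apply Lemma~\ref{lem:loop} to $(\relstr{G},\group{G})$: it yields that $(\relstr{G},\group{G})$ pp-interprets $\relstr{K}_3$ with parameters, that is, that $\relstr{G}$ together with the orbits of $\group{G}$ on finite tuples pp-interprets $\relstr{K}_3$.

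To conclude, I would transfer this interpretation from the data $(\relstr{G},\group{G})$ to $\relstr{A}$ itself. The relations of $\relstr{G}$ are by construction pp-definable in $\relstr{A}$ with the parameters $c_1,\dots,c_n$. For the orbit of $\group{G}$ on a finite tuple $\tuple{a}$, note that it consists exactly of those tuples $\tuple{b}$ for which $(\tuple{b},c_1,\dots,c_n)$ lies in the $\Aut(\relstr{A})$-orbit of $(\tuple{a},c_1,\dots,c_n)$; since $\relstr{A}$ is a core, that latter orbit is pp-definable in $\relstr{A}$ (the fact already invoked in the proof of Corollary~\ref{cor:K3-Siggers}), and substituting the constants $c_1,\dots,c_n$ into its pp-definition exhibits the orbit of $\tuple{a}$ as pp-definable in $\relstr{A}$ with parameters. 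Composing pp-interpretations, $\relstr{A}$ pp-interprets $\relstr{K}_3$ with parameters. As $\relstr{A}$ is a core, $\CSP(\relstr{K}_3)$ reduces to $\CSP(\relstr{A})$, and since $\CSP(\relstr{K}_3)$ is the $\NP$-hard $3$-coloring problem, $\CSP(\relstr{A})$ is $\NP$-hard.

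The argument is therefore essentially an assembly of Lemma~\ref{lem:loop} with the transfer principles for pp-interpretations recalled in the paper; the one point requiring genuine care is the penultimate step, namely verifying that naming the parameters does not destroy pp-definability of the orbit relations, so that the gg-system's pp-interpretation of $\relstr{K}_3$ becomes an honest pp-interpretation in $\relstr{A}$ with parameters.
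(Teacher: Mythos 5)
Your proof is correct and follows exactly the route the paper intends (the corollary is stated there as an immediate consequence of Lemma~\ref{lem:loop}, with the transfer back to $\relstr{A}$ being the same argument as in the proof of Corollary~\ref{cor:K3-Siggers}): form the gg-system with the stabilizer $\Aut(\relstr{A},c_1,\dots,c_n)$, apply the pseudoloop lemma, and use pp-definability of orbits in a core plus substitution of the parameters to turn the gg-system's interpretation of $\relstr{K}_3$ into a pp-interpretation with parameters in $\relstr{A}$. The point you flag as requiring care --- that the orbits of the stabilizer are obtained from pp-definable $\Aut(\relstr{A})$-orbits of extended tuples by plugging in the constants --- is indeed the only nontrivial step, and you handle it correctly.
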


Recall that a digraph is \emph{smooth} if each vertex has an incoming and an outgoing edge, and a digraph has \emph{algebraic length 1} if it contains a closed walk with one more forward edges than backward edges. 
The finite loop lemma for graphs~\cite{HellNesetril,B05} has been generalized to finite smooth digraphs with algebraic length $1$ in~\cite{BartoKozikNiven,Cyclic}. We conjecture that the pseudoloop lemma can be generalized to such digraphs as well.

\begin{conjecture}~\label{conj:loop}
Let $\group{G}$ be an oligomorphic permutation group on $G$ and let $\relstr{G}$ be a countable smooth digraph of algebraic length $1$ on $G$ which is  invariant under $\group{G}$. Then either $\relstr{G}$ contains a pseudoloop, or $\relstr{G}$ together with the orbits of $\group{G}$ on finite tuples pp-interprets $\relstr{K}_3$ with parameters.
\end{conjecture}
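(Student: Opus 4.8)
\subsection*{A proposal for Conjecture~\ref{conj:loop}}

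The plan is to imitate the finite development --- Bulatov's loop lemma for undirected graphs~\cite{B05}, extended to finite smooth digraphs of algebraic length~$1$ in~\cite{BartoKozikNiven} --- and to transport it to the $\omega$-categorical world in the same way that Lemma~\ref{lem:loop} transports~\cite{B05}. First I would set up \emph{digraph-group-systems}, i.e.\ pairs $(\relstr{G},\group{G})$ with $\group{G}$ an oligomorphic permutation group preserving a digraph $\relstr{G}$, with the notions of pseudoloop, support, pp-interpretation, and minimality copied from Definition~\ref{defn:gg}. The crucial first move is to replace the digraph by an undirected graph: the hypothesis of algebraic length~$1$ provides a closed walk with one more forward than backward edge, which therefore has odd total length, so forgetting orientations yields a closed walk of odd length. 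Hence the symmetrization $\relstr{G}^{\mathrm{sym}}$ given by $R^{\mathrm{sym}}(x,y):\leftrightarrow R(x,y)\vee R(y,x)$ is \emph{non-bipartite}, and one checks that $(\relstr{G}^{\mathrm{sym}},\group{G})$ is still pseudoloop-free and has no isolated vertices. Passing moreover to the pp-definable (and $\group{G}$-invariant, by oligomorphicity) subsystem on those vertices that lie on an odd closed walk, Conjecture~\ref{conj:loop} is reduced to an \emph{$\omega$-categorical Hell--Ne\v{s}et\v{r}il theorem}: an oligomorphic, pseudoloop-free, non-bipartite undirected gg-system without isolated vertices pp-interprets $\relstr{K}_3$ with parameters.

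To prove this non-bipartite pseudoloop lemma I would adapt the indicator construction of Hell and Ne\v{s}et\v{r}il~\cite{HellNesetril}: from such a system pp-interpret (with parameters, using a fixed shortest odd closed walk) a new gg-system that is again oligomorphic and pseudoloop-free but now contains an induced $\relstr{K}_3$, and then feed it into Lemma~\ref{lem:loop}. As in Steps~$0$--$4$ of the proof of Lemma~\ref{lem:loop}, oligomorphicity keeps the number of pp-definable relations of each arity finite, so the iterative closure and minimization steps needed to make the construction go through terminate; and where a finite witness must be upgraded to a global one, a compactness argument as in Section~\ref{sect:main} would be used.

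The main obstacle is exactly the passage from ``non-bipartite'' to ``contains a $\relstr{K}_3$''. The argument behind Lemma~\ref{lem:loop} is special to graphs containing a triangle, since it runs through the powers $\relstr{T}_k=\relstr{K}_3^{\,k}$, and this cannot be dropped: triangle-free non-bipartite graphs such as $\relstr{C}_5$, and all of their tensor powers, contain no $\relstr{T}_k$. In the finite setting this gap is closed either by the indicator construction, which is combinatorial but is normally formulated for finite graphs and would have to be rebuilt as a pp-interpretation respecting the $\group{G}$-action, or --- in the digraph formulation of~\cite{BartoKozikNiven} --- by absorption theory, of which no usable $\omega$-categorical analogue is currently available. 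Producing a proof of the non-bipartite pseudoloop lemma that stays within the pp-definition-and-compactness toolkit of the present paper (or, alternatively, developing absorption theory for oligomorphic function clones) is therefore the crux; granting it, the reduction above --- together with the bookkeeping needed to carry pseudoloops and the group action through symmetrization and the indicator construction --- should be routine.
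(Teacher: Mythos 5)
This statement is an open conjecture, not a theorem of the paper: the authors explicitly leave it unresolved and offer only the intermediate targets (a)--(c) that you yourself cite. There is therefore no proof in the paper to compare against, and your sketch is honest in stopping at its acknowledged obstacle --- the lack of an $\omega$-categorical analogue of the Hell--Ne\v{s}et\v{r}il indicator construction or of absorption --- which is indeed the crux of the problem.

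However, the portion you call ``routine'' also contains a real gap. The symmetrization $R^{\mathrm{sym}}(x,y)\leftrightarrow R(x,y)\vee R(y,x)$ is a \emph{disjunction}, hence not primitive positive, so $(\relstr{G}^{\mathrm{sym}},\group{G})$ is in general not pp-definable from $(\relstr{G},\group{G})$, and a pp-interpretation of $\relstr{K}_3$ in the symmetrized system does not transfer back to the original digraph system --- which is what Conjecture~\ref{conj:loop} demands. This is exactly why the finite development in~\cite{BartoKozikNiven,Cyclic} does not simply forget orientations and invoke Hell--Ne\v{s}et\v{r}il, and why the paper lists (a) non-bipartite graphs, (b) a digraph with one symmetric edge, and (c) strongly connected digraphs of algebraic length $1$ as genuinely separate intermediate steps rather than reducing (c) to (a). To salvage your reduction you would need an undirected relation produced by a pp-formula --- e.g.\ built from $R\circ R^{-1}$, $R^m\cap(R^{-1})^m$, or longer alternating compositions --- and then argue, using smoothness and algebraic length $1$, that the result is still pseudoloop-free, non-bipartite on a suitable pp-definable support, and strong enough to pull a pp-interpretation of $\relstr{K}_3$ back; none of this is automatic. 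So the plan has two missing ideas, not one: the $\omega$-categorical non-bipartite case (which you correctly flag), and a genuinely pp reduction of the digraph case to it.
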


As intermediate steps we suggest the following stronger assumptions on $\relstr{G}$.
\begin{itemize}
\item[(a)] $\relstr{A}$ is a non-bipartite graph;
\item[(b)] $\relstr{A}$ is a digraph containing 
\[
(\{a,b,c\}; \{(a,b),(b,c),(c,a),(b,a)\})
\] 
as a subgraph (not necessarily induced);
\item[(c)] $\relstr{A}$ is a strongly connected digraph of algebraic length $1$ (equivalently, the GCD of the length of cycles is $1$).
\end{itemize}
A positive answer to Conjecture~\ref{conj:loop} under the assumption (b) or (c) would allow a strengthening of item (iii) of Theorem~\ref{thm:main} to a $4$-variable pseudo-Siggers operation $\alpha s(r,a,r,e) \approx \beta s(a,r,e,a)$ (see~\cite{Sig10,KMM14}).
Another open problem is whether it is possible to replace item (iii) of Theorem~\ref{thm:main} by pseudo-weak-near-unanimity operations (see~\cite{MM08}).
On the negative side, it has been observed that the CSP classification for the  reducts of $(\relstr{Q};<)$ shows that
the syntactically strongest characterization of (iii) in Theorem~\ref{thm:main} in the finite case by means of cyclic operations~\cite{Cyclic} cannot be lifted to the infinite, at least not in the straightforward way of adding unary functions.

Our final remarks concern a new dichotomy conjecture that has been formulated for CSPs of reducts of finitely bounded homogeneous structures~\cite{wonderland}. An \emph{h1 clone homomorphism} is a mapping from one function clone to another which preserves arities and composition with projections (equivalently, preserves identities of height $1$). When the polymorphism clone of an $\omega$-categorical structure has a uniformly continuous h1 clone homomorphism to $\trivclone$, then its CSP is $\NP$-hard. The conjecture states that the reducts of finitely bounded homogeneous structures whose polymorphism clone does not have such a mapping have  polynomial-time solvable CSP. 
The following corollary summarizes all we know so far about various clone homomorphisms to the clone of projections. In particular, it shows that the new conjecture is ``better'' in that it is implied by Conjecture~\ref{conj:dicho} (but not necessarily vice-versa).

\begin{corollary}
Consider the following statements for an $\omega$-categorical core $\relstr{A}$.
\begin{enumerate}
\item[(1)] $\Pol(\relstr{A})$ has a uniformly continuous clone homomorphism to $\trivclone$.
\item[(1')] $\Pol(\relstr{A})$ has a continuous clone homomorphism to $\trivclone$.
\item[(2)] $\Pol(\relstr{A})$ has a clone homomorphism to $\trivclone$.
\item[(3)] Some $\Pol(\relstr{A},c_1,\ldots,c_n)$ has a clone homomorphism to $\trivclone$.
\item[(3')] Some $\Pol(\relstr{A},c_1,\ldots,c_n)$ has a continuous clone homomorphism to $\trivclone$.
\item[(3'')] Some $\Pol(\relstr{A},c_1,\ldots,c_n)$ has a uniformly continuous clone homomorphism to $\trivclone$.
\item[(4)] Some $\Pol(\relstr{A},c_1,\ldots,c_n)$ has a uniformly continuous  h1 clone homomorphism to $\trivclone$. 
\item[(4')] $\Pol(\relstr{A})$ has a uniformly continuous  h1 clone homomorphism to $\trivclone$. 
\item[(5)] $\Pol(\relstr{A})$ has a continuous  h1 clone homomorphism to $\trivclone$. 
\item[(5')] Some $\Pol(\relstr{A},c_1,\ldots,c_n)$ has a continuous  h1 clone homomorphism to $\trivclone$. 
\item[(6)] $\Pol(\relstr{A})$ has an  h1 clone homomorphism to $\trivclone$. 
\item[(6')] Some $\Pol(\relstr{A},c_1,\ldots,c_n)$ has an  h1 clone homomorphism to $\trivclone$. 
\end{enumerate}
Then all statements with equal number are equivalent, and (i) implies (j) for all $1\leq i\leq j\leq 6$.
\end{corollary}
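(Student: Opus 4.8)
The plan is to sort the twelve statements by their shared number into the six blocks $\{(1),(1')\}$, $\{(2)\}$, $\{(3),(3'),(3'')\}$, $\{(4),(4')\}$, $\{(5),(5')\}$, $\{(6),(6')\}$, and to prove that each block implies the next and that each block is internally equivalent. The implications between consecutive blocks and the ``easy half'' of each internal equivalence are immediate from two trivial observations: weakening the class of admissible maps (uniformly continuous $\Rightarrow$ continuous $\Rightarrow$ arbitrary; clone homomorphism $\Rightarrow$ h1 clone homomorphism), and restricting a homomorphism defined on $\Pol(\relstr A)$ to a stabilizer $\Pol(\relstr A,c_1,\dots,c_n)\subseteq\Pol(\relstr A)$ — both of which preserve continuity, uniform continuity and the h1 property. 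Thus it remains to prove the converse implications inside the blocks $\{(1),(1')\}$, $\{(3),(3'),(3'')\}$, $\{(4),(4')\}$, $\{(5),(5')\}$, $\{(6),(6')\}$.

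Inside $\{(3),(3'),(3'')\}$ the equivalence of (3) and (3') is exactly the equivalence of items (ii) and (i) of Theorem~\ref{thm:main}. For (3')$\Rightarrow$(3'') I would use Theorem~\ref{thm:int}: a continuous clone homomorphism from some stabilizer to $\trivclone$ forces $\relstr A$ to pp-interpret, with parameters, all finite structures, in particular the structure $\relstr M$ from the introduction, for which $\Pol(\relstr M)=\trivclone$; and a pp-interpretation of a \emph{finite} structure induces a \emph{uniformly} continuous clone homomorphism of polymorphism clones, since only the finitely many values of the induced operation on the finite domain matter and each depends on a uniformly bounded number of coordinates of the original operation. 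Combining the same ``pp-interpretation $\Rightarrow$ uniformly continuous'' remark with the parameter-free Topological Birkhoff theorem~\cite{Topo-Birk}, which turns a continuous clone homomorphism $\Pol(\relstr A)\to\Pol(\relstr M)=\trivclone$ into a pp-interpretation of $\relstr M$ in $\relstr A$ without parameters, disposes of (1')$\Rightarrow$(1) in the block $\{(1),(1')\}$.

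There remain the three h1-blocks, where I must lift an h1 clone homomorphism $\xi_0\colon\clone{C}_0\to\trivclone$ from a stabilizer $\clone{C}_0=\Pol(\relstr A,c_1,\dots,c_n)$ to a homomorphism $\xi\colon\Pol(\relstr A)\to\trivclone$, preserving uniform continuity, continuity, or nothing, respectively. This is where the core hypothesis is used. Fix once and for all, for each tuple $\bar a$ in the $\Aut(\relstr A)$-orbit of $\bar c=(c_1,\dots,c_n)$, an automorphism $\epsilon_{\bar a}$ with $\epsilon_{\bar a}(\bar c)=\bar a$ and $\epsilon_{\bar c}=\mathrm{id}$, and define $\xi(f):=\xi_0(\epsilon_{\bar a_f}^{-1}\circ f)$, where $\bar a_f:=(f(c_1,\dots,c_1),\dots,f(c_n,\dots,c_n))$. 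Since $\relstr A$ is a core, $x\mapsto f(x,\dots,x)$ is an elementary self-embedding, so $\bar a_f$ realises the type of $\bar c$ and hence, by $\omega$-categoricity, lies in its orbit; and $\epsilon_{\bar a_f}^{-1}\circ f$ fixes each $c_i$ on the diagonal, hence belongs to $\clone{C}_0$. I would then verify that $\xi$ is an h1 clone homomorphism: it sends projections to projections (using $\epsilon_{\bar c}=\mathrm{id}$), and it preserves minors because a minor of $f$ has the same associated tuple $\bar a_f$, so $\xi$ applied to a minor of $f$ equals $\xi_0$ applied to the corresponding minor of $\epsilon_{\bar a_f}^{-1}\circ f$, which by the h1 property of $\xi_0$ is the right minor of $\xi(f)$. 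Finally $\bar a_f$ depends only on the restriction of $f$ to the finitely many ``constant'' tuples $(c_i,\dots,c_i)$, so continuity of $\xi_0$ passes to $\xi$, and uniform continuity of $\xi_0$ with a finite witnessing set $D$ passes to uniform continuity of $\xi$ with witnessing set $D\cup\{c_1,\dots,c_n\}$.

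I expect the h1-lift to be the main obstacle. For a genuine clone homomorphism $\xi_0$ the value $\xi_0(\epsilon^{-1}\circ f)$ does not depend on which $\epsilon\in\Aut(\relstr A)$ with $\epsilon(\bar c)=\bar a_f$ one picks, since post-composition with an element of $\Aut(\relstr A,\bar c)$ is automatically invisible to $\xi_0$; for an h1 clone homomorphism this invariance fails in general, so one must pin the choice of $\epsilon$ down canonically via the fixed family $(\epsilon_{\bar a})$ and exploit that taking minors does not move $\bar a_f$. It is precisely these two points that make the verification of the h1 and minor-preservation conditions go through, and that have to be carried out with care.
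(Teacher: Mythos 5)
Your proof is correct, and it takes a more self-contained route than the paper, which disposes of the corollary in a few lines by citation. For the blocks $\{(1),(1')\}$ and $\{(3),(3'),(3'')\}$ the paper invokes~\cite{Topo-Birk} (with~\cite{GPin15} for an explicit argument) for the underlying fact that a continuous clone homomorphism from a closed oligomorphic clone to a finite clone is automatically uniformly continuous; your detour through Theorem~\ref{thm:int} plus the observation that a pp-interpretation of a \emph{finite} structure induces a uniformly continuous clone homomorphism is essentially that same fact repackaged via pp-interpretations, and is valid (the parameter-free Topological Birkhoff theorem is indeed available in~\cite{Topo-Birk}). The genuine added content is in the three h1-blocks, where the paper simply cites~\cite{wonderland} for $(4)\Leftrightarrow(4')$, $(5)\Leftrightarrow(5')$, $(6)\Leftrightarrow(6')$, while you reconstruct the lift $\xi(f):=\xi_0\bigl(\epsilon_{\bar a_f}^{-1}\circ f\bigr)$ from scratch: the verification that $\bar a_f$ lies in the orbit of $\bar c$ (core plus $\omega$-categoricity), that $\epsilon_{\bar a_f}^{-1}\circ f$ lands in the stabiliser, that minors do not move $\bar a_f$ so $\xi$ is h1, and that continuity and uniform continuity pass through because $\bar a_f$ is read off from the finitely many constant tuples $(c_i,\dots,c_i)$, is all sound. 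One small presentational caveat: the between-block step $(3)\Rightarrow(4)$ is not literally ``immediate'' from your two trivial observations, since it passes from an arbitrary to a uniformly continuous map; it needs $(3)\Leftrightarrow(3'')$ first. Since you establish all internal equivalences this is harmless, but it is worth making explicit that the cross-block arrows go from the strongest member of one block to the weakest member of the next.
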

\begin{proof}
It is clear that the strength of the statements is decreasing; the only non-trivial part are the equivalences.
(1) and (1') are equivalent by the proof in~\cite{Topo-Birk} (cf.~\cite{GPin15} for an explicit proof thereof). (3) and (3') are equivalent by Theorem~\ref{thm:main}, and (3') and (3'') again by the proof in~\cite{Topo-Birk}. (4) and (4'), (5) and (5'), as well as (6) and (6') are equivalent by~\cite{wonderland}.
\end{proof}

We remark that (4) is the weakest condition known to imply NP-hardness of the CSP. 

An example communicated to us by Ross Willard shows that the implication from (2) to (3) cannot be reversed, even for finite $\relstr{A}$; another example is known among the reducts of $(\mathbb Q;<)$. For all remaining implications, no counterexamples are known. The implication from (2) to (1) is conjectured in~\cite{BPP-projective-homomorphisms}, the implication from (4) to (3) would, if true, imply that the two conjectures are equivalent. 
The most optimistic speculation would be that (6) implies (3). A positive answer would show that both conjectures can be true and yield a particularly esthetically pleasing --- core-free, topology-free, and without identities of height greater than $1$ --- formulation of the tractability conjecture.

\bibliographystyle{alpha}

\bibliography{pseudo}

\begin{thebibliography}{BKKR69}

\bibitem[Bar15]{Barto-survey}
Libor Barto.
\newblock The constraint satisfaction problem and universal algebra.
\newblock {\em The Bulletin of Symbolic Logic}, 21:319--337, 2015.

\bibitem[BEKP15]{BEKP}
Manuel Bodirsky, David Evans, Michael Kompatscher, and Michael Pinsker.
\newblock A counterexample to the reconstruction of omega-categorical
  structures from their endomorphism monoids.
\newblock Preprint arXiv:1510.00356, 2015.

\bibitem[BG08]{BodirskyGrohe}
Manuel Bodirsky and Martin Grohe.
\newblock Non-dichotomies in constraint satisfaction complexity.
\newblock In Luca Aceto, Ivan Damgard, Leslie~Ann Goldberg, Magn\'us~M.
  Halld\'orsson, Anna Ing\'olfsd\'ottir, and Igor Walukiewicz, editors, {\em
  Proceedings of the International Colloquium on Automata, Languages and
  Programming (ICALP)}, Lecture Notes in Computer Science, pages 184 --196.
  Springer Verlag, July 2008.

\bibitem[Bir35]{Bir-On-the-structure}
Garrett Birkhoff.
\newblock On the structure of abstract algebras.
\newblock {\em Mathematical Proceedings of the Cambridge Philosophical
  Society}, 31(4):433--454, 1935.

\bibitem[BJP16]{Phylo-Complexity}
Manuel Bodirsky, Peter Jonsson, and Trung~Van Pham.
\newblock The complexity of phylogeny constraint satisfaction.
\newblock In {\em Proceedings of the Symposium on Theoretical Aspects of
  Computer Science (STACS)}, 2016.
\newblock Preprint arXiv:1503.07310.

\bibitem[BK09]{tcsps-journal}
Manuel Bodirsky and Jan K\'ara.
\newblock The complexity of temporal constraint satisfaction problems.
\newblock {\em Journal of the ACM}, 57(2):1--41, 2009.
\newblock An extended abstract appeared in the Proceedings of the Symposium on
  Theory of Computing (STOC'08).

\bibitem[BK12a]{BK12}
Libor Barto and Marcin Kozik.
\newblock Absorbing subalgebras, cyclic terms, and the constraint satisfaction
  problem.
\newblock {\em Logical Methods in Computer Science}, 8(1), 2012.

\bibitem[BK12b]{Cyclic}
Libor Barto and Marcin Kozik.
\newblock Absorbing subalgebras, cyclic terms and the constraint satisfaction
  problem.
\newblock {\em Logical Methods in Computer Science}, 8/1(07):1--26, 2012.

\bibitem[BKJ05]{JBK}
Andrei~A. Bulatov, Andrei~A. Krokhin, and Peter~G. Jeavons.
\newblock Classifying the complexity of constraints using finite algebras.
\newblock {\em SIAM Journal on Computing}, 34:720--742, 2005.

\bibitem[BKKR69]{BoKaKoRo}
V.~G. Bodnar\v{c}uk, L.~A. Kalu\v{z}nin, V.~N. Kotov, and B.~A. Romov.
\newblock Galois theory for {Post} algebras, part {I} and {II}.
\newblock {\em Cybernetics}, 5:243--539, 1969.

\bibitem[BKN09]{BartoKozikNiven}
Libor Barto, Marcin Kozik, and Todd Niven.
\newblock The {CSP} dichotomy holds for digraphs with no sources and no sinks
  (a positive answer to a conjecture of {B}ang-{J}ensen and {H}ell).
\newblock {\em SIAM Journal on Computing}, 38(5), 2009.

\bibitem[BN06]{BodirskyNesetrilJLC}
Manuel Bodirsky and Jaroslav Ne\v{s}et\v{r}il.
\newblock Constraint satisfaction with countable homogeneous templates.
\newblock {\em Journal of Logic and Computation}, 16(3):359--373, 2006.

\bibitem[Bod07]{Cores-journal}
Manuel Bodirsky.
\newblock Cores of countably categorical structures.
\newblock {\em Logical Methods in Computer Science}, 3(1):1--16, 2007.

\bibitem[Bod12]{Bodirsky-HDR}
Manuel Bodirsky.
\newblock Complexity classification in infinite-domain constraint satisfaction.
\newblock M\'emoire d'habilitation \`a diriger des recherches, Universit\'{e}
  Diderot -- Paris 7. Available at arXiv:1201.0856, 2012.

\bibitem[BOP15]{wonderland}
Libor Barto, Jakub Opr\v{s}al, and Michael Pinsker.
\newblock The wonderland of reflections.
\newblock Preprint arXiv:1510.04521, 2015.

\bibitem[BP11]{BP-reductsRamsey}
Manuel Bodirsky and Michael Pinsker.
\newblock Reducts of {R}amsey structures.
\newblock {\em AMS Contemporary Mathematics, vol. 558 (Model Theoretic Methods
  in Finite Combinatorics)}, pages 489--519, 2011.

\bibitem[BP15]{Topo-Birk}
Manuel Bodirsky and Michael Pinsker.
\newblock Topological {B}irkhoff.
\newblock {\em Transactions of the American Mathematical Society},
  367:2527--2549, 2015.

\bibitem[BPP14]{BPP-projective-homomorphisms}
Manuel Bodirsky, Michael Pinsker, and Andr\'{a}s Pongr\'acz.
\newblock Projective clone homomorphisms.
\newblock Preprint arXiv:1409.4601, 2014.

\bibitem[Bul05]{B05}
Andrei~A. Bulatov.
\newblock {$H$}-coloring dichotomy revisited.
\newblock {\em Theoretical Computer Science}, 349(1):31--39, 2005.

\bibitem[FV99]{FederVardi}
Tom\'as Feder and Moshe~Y. Vardi.
\newblock The computational structure of monotone monadic {SNP} and constraint
  satisfaction: {a} study through {D}atalog and group theory.
\newblock {\em {SIAM} Journal on Computing}, 28:57--104, 1999.

\bibitem[Gei68]{Geiger}
David Geiger.
\newblock Closed systems of functions and predicates.
\newblock {\em Pacific Journal of Mathematics}, 27:95--100, 1968.

\bibitem[GP]{GPin15}
Mai Gehrke and Michael Pinsker.
\newblock Uniform {B}irkhoff.
\newblock In preparation.

\bibitem[HM88]{HM88}
David Hobby and Ralph McKenzie.
\newblock {\em The structure of finite algebras}, volume~76 of {\em
  Contemporary Mathematics}.
\newblock American Mathematical Society, Providence, RI, 1988.

\bibitem[HN90]{HellNesetril}
Pavol Hell and Jaroslav Ne\v{s}et\v{r}il.
\newblock On the complexity of {H}-coloring.
\newblock {\em Journal of Combinatorial Theory, Series B}, 48:92--110, 1990.

\bibitem[Hod97]{Hodges}
Wilfrid Hodges.
\newblock {\em A shorter model theory}.
\newblock Cambridge University Press, Cambridge, 1997.

\bibitem[KMM14]{KMM14}
Keith Kearnes, Petar Markovi\'c, and Ralph McKenzie.
\newblock Optimal strong {M}alcev conditions for omitting type 1 in locally
  finite varieties.
\newblock {\em Algebra universalis}, 72(1):91--100, 2014.

\bibitem[MM08]{MM08}
Mikl{\'o}s Mar{\'o}ti and Ralph McKenzie.
\newblock Existence theorems for weakly symmetric operations.
\newblock {\em Algebra Universalis}, 59(3-4):463--489, 2008.

\bibitem[Pin15]{Pin15}
Michael Pinsker.
\newblock Algebraic and model theoretic methods in constraint satisfaction.
\newblock arXiv:1406.7870, 2015.

\bibitem[Sig10]{Sig10}
Mark~H. Siggers.
\newblock A strong {M}al'cev condition for locally finite varieties omitting
  the unary type.
\newblock {\em Algebra universalis}, 64(1-2):15--20, 2010.

\bibitem[Tay77]{T77}
Walter Taylor.
\newblock Varieties obeying homotopy laws.
\newblock {\em Canadian Journal of Mathematics}, 29(3):498--527, 1977.

\end{thebibliography}





\end{document}